\newtheorem{theorem}{Theorem}[section]
\newtheorem{proposition}[theorem]{Proposition}
\newtheorem{remark}[theorem]{Remark}
\def\square{\hbox{\vrule\vbox{\hrule\phantom{o}\hrule}\vrule}}
\def\eps{\varepsilon}
\newcommand{\be}{\begin{equation}}
\newcommand{\ee}{\end{equation}}
\numberwithin{equation}{section}
\newcommand{\Z}{\mathbb{Z}}
\newcommand{\R}{\mathbb{R}}
\newcommand{\C}{\mathbb{C}}
\newcommand{\W}{{\mathcal W}}
\newcommand{\cO}{{\mathcal O}}
\newcommand{\lan}{\langle}
\newcommand{\ran}{\rangle}
\newcommand{\pal}{\parallel}
\newcommand{\e}{\varepsilon}
\newcommand{\re}{{\rm Re}}
\newcommand{\im}{{\rm Im}}
\newcommand{\ord}{{\mathcal O}}
\newcommand{\ai}{{\rm Ai}\,}
\newcommand{\bi}{{\rm Bi}\,}
\newcommand{\K}{{\mathcal K}\,}
\def\eq#1{(\ref{#1})}
\newtheorem{thm}{Theorem}[section]
\newtheorem{rem}{Remark}[section]
\numberwithin{equation}{section}
\begin{document}

\title[Resonances at energy-level crossing]{Molecular predissociation resonances\\  at an energy-level crossing}
\author{S.~Fujii\'e${}^1$, A.~Martinez${}^2$ and T.~Watanabe${}^3$}

%\def\comment#1{\medskip \hskip -2em\begin{minipage}[l]{6in} \noindent\sf
%#1\end{minipage}\bigskip}

%%%%%%%%%%%%%%%%%%%%%%%%%%%%%%%%%%%%%%%%%%%%

\maketitle
\addtocounter{footnote}{1}
\footnotetext{{\tt\small  Department of Mathematical Sciences, Ritsumeikan University, 111 Noji-Higashi, Kusatsu, 525-8577,  Japan, 
fujiie@fc.ritsumei.ac.jp} }  
\addtocounter{footnote}{1}
\footnotetext{{\tt\small Universit\`a di Bologna,  
Dipartimento di Matematica, Piazza di Porta San Donato, 40127
Bologna, Italy, 
andre.martinez@unibo.it }}  
\footnotetext{{\tt\small  Department of Mathematical Sciences, Ritsumeikan University,  111 Noji-Higashi, Kusatsu, 525-8577,  Japan,
t-watana@se.ritsumei.ac.jp} }  
%\tableofcontents
%\pagebreak

\begin{abstract}
We study the resonances of a two-by-two semiclassical system of one dimensional Schr\"odinger operators, near an energy where the two potentials intersect transversally, one of them being bonding, and the other one anti-bonding. Under an ellipticity condition on the interaction, we obtain optimal estimates on the location and on the widths of these resonances. 
\end{abstract}
\vskip 4cm
{\it Keywords:} Resonances; Born-Oppenheimer approximation; eigenvalue crossing.
\vskip 0.5cm
{\it Subject classifications:} 35P15; 35C20; 35S99; 47A75.
\pagebreak

\section{Introduction}
This paper is devoted to the study of diatomic molecular predissociation resonances in the Born-Oppenheimer approximation, at energies close to that of the crossing of the electronic levels.  In such a situation, we aim to provide precise estimates both on the real part and on the imaginary part (width) of the resonances. As it is well known, they respectively correspond to the radiation frequency and to the inverse of the life-time of the molecule.

In all of the work, the parameter $h$ stands for the square-root of the inverse of the (mean-) mass of the nuclei. The Born-Oppenheimer approximation permits to reduce the study to that of a semiclassical system of Schr\"odinger-type operators (see, e.g., \cite{KMSW, MaMe, MaSo}), and the size of the system depends on the number of electronic levels that are involved. For instance, at sufficiently low energies, this system is scalar, and, typically, one can apply the numerous results coming from the semiclassical analysis of the Schr\"odinger operator (see, e.e., \cite{Ha, HeSj, Ma, ReSi, HiSi, DiSj, DyZw, Zw} and references therein).

On the contrary, when several electronic levels are involved, only few results are available. One may quote \cite{Ba, Na, FLN, GrMa1}, where very particular potentials are considered, and \cite{Kl, GrMa2, MaBr}, where the potentials are much more general, but the energy considered is lower than that of the crossing. Actually, in this last situation the width of the resonances can be estimated by a tunnelling effect through a potential barrier, and it is exponentially small (in the parameter $h$). However, according to chemists, these widths correspond to such long life-times that it may even surpass the age of the universe! For that reason, it seems more reasonable (and, in any case, of interest) to consider situations where the widths are not that small. 

As a matter of fact, this is what is expected when the energy considered becomes very close to that of the crossing. However, this case is difficult to treat in general, because it corresponds to a somehow degenerate situation where the characteristic set of the operator has a singularity of the type $\{ |\xi|^4-x_1^2=0\}$.

Here, we study a model with one degree of freedom, where such a phenomenon occurs. Namely, we consider a $2\times2$ matrix system, the diagonal part of which consists of semiclassical Schr\"odinger operators, and the off-diagonal part of a lower order differential interaction. We assume that the two potentials cross transversally at the origin, with value 0, and that, at this energy level, one of the two potentials admits a well, while the other one is non-trapping (see figure 1).

\begin{figure}[h]
\label{fig1}
\begin{center}
\scalebox{0.5}[0.35]{
\includegraphics{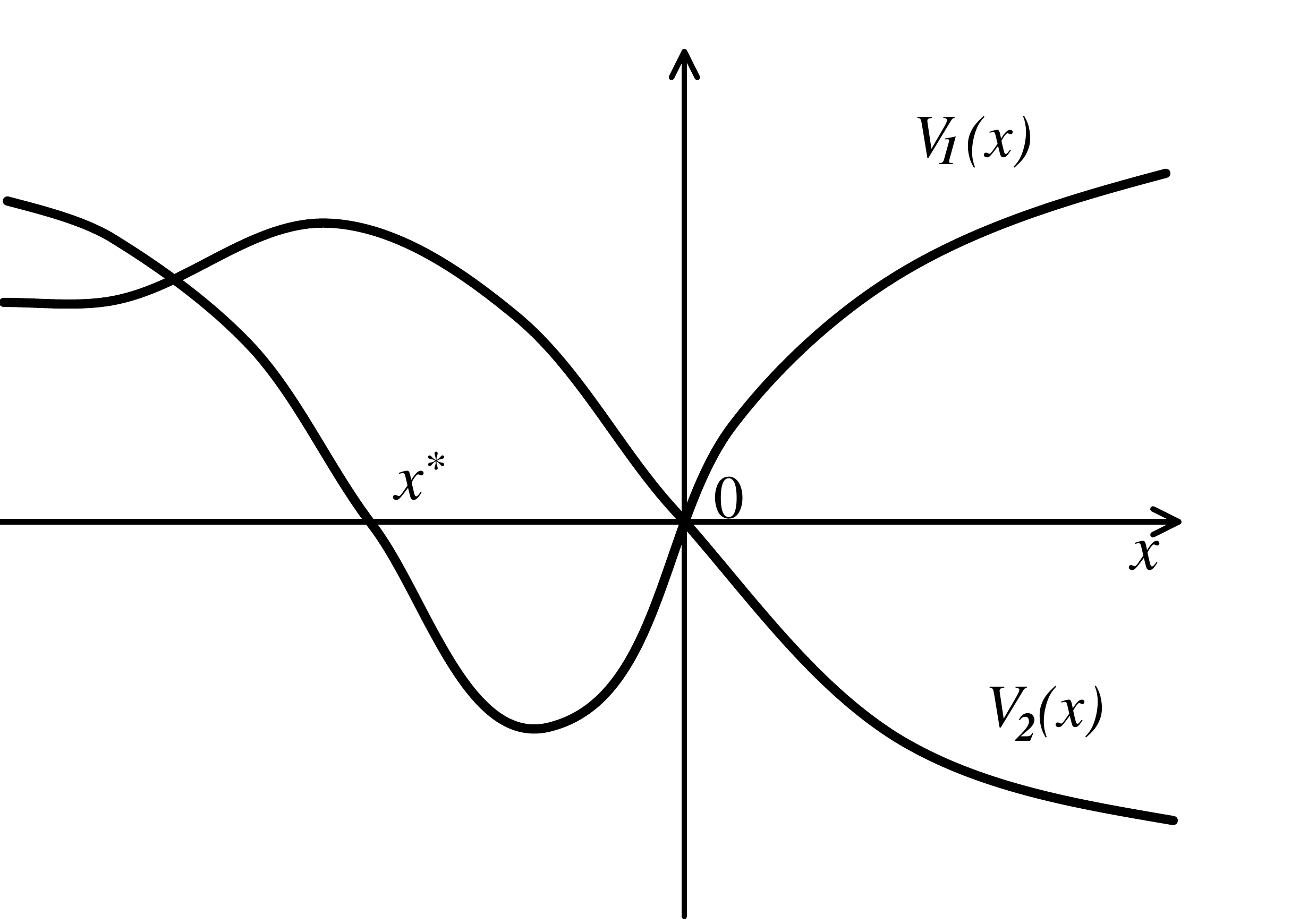}
}
\end{center}
\caption{The two potentials}
\end{figure}

For such a model, we study the resonances $E=E(h)$ that have a real part $\ord (h^{2/3})$ and an imaginary part $\ord (h)$. We actually prove their existence, and give their asymptotic behaviour as $h\to 0_+$. In particular, we find that their widths behave exactly like $h^{5/3}$, except possibly for particular values of the limit of $h^{-2/3}E(h)$, corresponding to positive zeros of some Airy-type function, and for which the width becomes $o(h^{5/3})$. 

It is important to notice that, for such a kind of system, none of the standard methods used in the scalar case can be applied. For instance, the exact WKB method (see, e.g., \cite{FuRa, Vo}) does not work, because of the presence of 4 phase-functions $\pm\varphi_j$, $j=1,2$ (two for each potentials). Indeed, the method requires that, for each of them, there exists at least one direction where the real part of difference with the three other ones increases, and here it cannot be the case. Even the formal semiclassical WKB constructions can be performed only at those points where the two potentials do not cross. Finally, the recent method proposed by D. Yafaev in \cite{Ya} (and from which this work has been mostly inspired) does not seem to be adaptable to our case.

Therefore, instead of trying to generalise the scalar methods, we have chosen to use particular fundamental solutions of the two scalar underlying Schr\"odinger operators, and to take advantage of the fact that everything is known on their behaviours (both semiclassical and at infinity) in order to solve the system by an iterative procedure. In that way, we can construct two exponentially decaying solutions on one side, and two outgoing solutions on the other side, with good estimates on their behaviour up to the interaction point where the two potentials cross. This makes possible to compute the Wronskian of these four solutions at that point, and obtain in this way the condition of quantization that determines the resonances. Then, a careful analysis of this condition leads to precise estimates on both the real part and the width of these resonances (see Theorem \ref{mainth}).

Let us mention that our methods still work for problems on $L^2(\R_+)$, with potentials $V_j(r)$ behaving like $c_j/r^\alpha$ in 0, with $c_j>0$ and $0\leq \alpha\leq 2$ (see Remark \ref{R+}). Moreover, under an additional assumption on $W$, perturbations of size $h^2$ can be admitted, too, such as semiclassical pseudodifferential operators of order $-2$, as it occurs after the Feshbach reduction in the Born-Oppenheimer reduction process (\cite{MaMe, KMSW, MaSo}): see Remark \ref{rempseudo}.

In the next section, we give the precise assumptions we work with, and we state our main result. Then, in Section \ref{sect3}, we construct fundamental solutions to the scalar operators both on the real negative half-line and on on the positive half-line, and give estimates on them. These fundamental solutions are then used, in Section \ref{sect4}, to construct  bounded solutions to the system in an iterative way, both on the negative half-line and on the positive half-line. In Section \ref{sect5}, these solutions and their derivatives are estimated more precisely at the crossing point, and, in Section \ref{sect6}, their Wronskian is computed and the quantization condition is written. In order to complete the proof, an analysis of this quantization condition is performed in Section \ref{sect7}, and additional informations are given concerning the resonant states. Finally, an appendix (Section \ref{sect8}) contains recalls on the Airy functions, some extensions of Yafaev's constructions for the scalar Schr\"odinger equation, and a list of formulas that may help the reader.

\section{Assumptions and results}

We consider a system of Schr\"odinger equations:
\begin{equation}
\label{sch}
\begin{aligned}
Pu &= Eu,\qquad
P &= \left(
\begin{matrix}
 P_1 & hW\\
hW^* & P_2
\end{matrix}
\right),
\end{aligned}
\end{equation}
where $D_x$ stands for $-i\frac{d}{d x}$, $P_j=h^2D_x^2 + V_j(x)$ ($j=1,2$), and $W=W(x,hD_x)$.

We suppose the following conditions on the potentials $V_1(x), 
V_2(x)$ (see Figure 1) and on the interaction
$W(x,hD_x)$:

{\bf Assumption (A1)}
$V_1(x)$, $V_2(x)$ 
 are real-valued analytic functions on $\R$,s and extend to holomorphic functions
 in the complex domain,
$$
\Gamma=\{x\in\C\,;\,|\im\, x|<\delta_0\lan\re \,x\ran\}
$$
where $\delta_0>0$ is a constant, and $\lan t\ran:=(1+|t|^2)^{1/2}$.

{\bf Assumption (A2)} For $j=1,2$, $V_j$ admits limits as $\re\, x\to \pm\infty$ in $\Gamma$, and they satisfy,
$$
\begin{aligned}
\lim_{{\re\,x\to -\infty}\atop{x\in \Gamma}}  V_1(x)>0\, ;\, \lim_{{\re\,x\to -\infty}\atop{x\in \Gamma}} V_2(x)>0\, ;\\
\lim_{{\re\,x\to +\infty}\atop{x\in \Gamma}} V_1(x)>0\, ;\, \lim_{{\re\,x\to +\infty}\atop{x\in \Gamma}} V_2(x)<0.
\end{aligned}
$$

{\bf Assumption (A3)} There exists a negative number $x^*<0$ such that,
\begin{itemize}
\item $V_1>0$ and $V_2>0$ on $(-\infty, x^*)$;
\item $V_1<0<V_2$ on $(x^*,0)$;
\item $V_2<0<V_1$ on $(0,+\infty)$,
\end{itemize}
and one has,
$$
V_1'(x^*)=:-\tau_0  <0,\quad  V_1'(0)=:\tau_1>0,\qquad V_2'(0)=:-\tau_2<0.
$$

{\bf Assumption (A4)}
$W(x,hD_x)$ is a first order differential operator
$$
W(x,hD_x)=r_0(x)+r_1(x)hD_x,
$$
where $r_0(x)$ and $r_1(x)$ are analytic and bounded in $\Gamma$, and $r_0(0) \not= 0$.

\vskip 0.3cm
In this situation, in a neighbourhood of the energy 0, the spectrum of $P$ is essential only, and the resonances of $P$ can be defined, e.g., as the values $E\in\C$ such that the equation $Pu=Eu$ has a non trivial outgoing solution $u$, that is, a non identically vanishing solution such that, for some $\theta >0$ sufficiently small, the function $x\mapsto u(xe^{i\theta})$ is in $L^2(\R)\oplus L^2(\R)$ (see, e.g., \cite{AgCo, ReSi}). Equivalently, the resonances are the eigenvalues of the operator $P$ acting on $L^2(\R_\theta)\oplus L^2(\R_\theta)$, where $\R_\theta$ is a complex distorsion of $\R$ that coincides with $e^{i\theta}\R$ for $x\gg 1$ (see, e.g., \cite{HeMa}). We denote by ${\rm Res}(P)$ the set of these resonances.

For $E\in \C$ small enough, we define the action,
$$
{\mathcal A}(E):= \int_{x_1^*(E)}^{x_1(E)}\sqrt{ E-V_1(t)} \, dt,
$$
where $x_1^*(E)$ (respectively $x_1(E)$) is the unique solution of $V_1(x)=E$ close to $x^*$ (respectively close to 0), and it is well-known that, in this situation, ${\mathcal A}(E)$ is an analytic function of $E$ near 0.

We also fix $C_0>0$ arbitrarily large, and we plan to study the resonances of $P$ lying in the set ${\mathcal D}_h(C_0)$ given by,
\be
{\mathcal D}_h(C_0):= [-C_0h^{2/3}, C_0h^{2/3}]-i[0,C_0h].
\ee
For $h>0$ and $k\in\Z$, we set,
\be
\label{deflambdakh}
\lambda_k(h):=\frac{-2{\mathcal A}(0)+(2k+1)\pi h}{2{\mathcal A}'(0)h^{2/3}},
\ee
where
${\mathcal A}'(0)=\int_{x^*}^0\frac1{2\sqrt{|V_1(x)|}} dx>0$ is the first derivative of ${\mathcal A}(E)$ at 0.
Then, our result is,

\begin{thm}\sl
\label{mainth}
Under Assumptions (A1)-(A4), for $h>0$ small enough, one has,
$$
{\rm Res}\,(P)\cap {\mathcal D}_h(C_0)
=\{E_k(h); k\in\Z\}\cap{\mathcal D}_h(C_0)
$$
where the
 $E_k(h)$'s are complex numbers that satisfy,
\be
\label{reEk}
\re \,E_k(h)=\lambda_k(h)h^{2/3}-\frac{ \lambda_k(h)^2{\mathcal A}''(0)}{2{\mathcal A}'(0)}h^{4/3}+\ord (h^{5/3}),
\ee
\be
\label{imEk}
\im \,E_k(h)=-\frac{2\pi^2r_0(0)^2\left(\mu_1(\lambda_k(h))^2+\mu_2(\lambda_k(h))^2\right)}{{\mathcal A}'(0)}h^{5/3} +\ord(h^2),
\ee
uniformly as $h\to 0_+$.
Here, the function $\mu_1$ and $\mu_2$ are defined by,
$$
\begin{aligned}
\mu_1(t):=\int_0^{+\infty} \ai (\tau_1^{-2/3}(\tau_1y-t))\ai (-\tau_2^{-2/3}(\tau_2y+t))dy;\\
\mu_2(t):=\int_0^{+\infty} \ai (\tau_2^{-2/3}(\tau_2y-t))\ai (-\tau_1^{-2/3}(\tau_1y+t))dy,
\end{aligned}
$$
where $\ai$ stands for the usual Airy function.
\end{thm}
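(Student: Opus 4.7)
The plan is to characterise each resonance $E\in\mathcal{D}_h(C_0)$ as a value of the spectral parameter at which the $4$-dimensional solution space of $Pu=Eu$ fails to split transversally into a ``left-admissible'' and a ``right-admissible'' subspace. Since both $V_1$ and $V_2$ stay positive as $\re x\to -\infty$, the space $\Wm(E)$ of solutions that are $L^2$ at $-\infty$ (after a small complex distortion) is $2$-dimensional; similarly, since $V_1>0$ and $V_2<0$ as $\re x\to+\infty$, the outgoing space $\Wp(E)$ is also $2$-dimensional. A resonance is exactly a value of $E$ at which $\Wm(E)\cap\Wp(E)\neq\{0\}$, equivalently, the Wronskian of the four solutions obtained by concatenating bases of $\Wpm(E)$ vanishes. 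The entire proof is then a sufficiently precise computation of this Wronskian at a convenient point near the crossing $x=0$.

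To build bases of $\Wpm(E)$ with uniform control up to the crossing, I would start from the scalar fundamental solutions of $P_j-E$ supplied by the Yafaev-type construction announced in Section~\ref{sect3}. Away from turning points they are exact WKB solutions; across the turning points $x_1^*(E),x_1(E)$ of $P_1-E$ and $x_2(E)$ of $P_2-E$ they are modelled on Airy functions. As $E\to 0$ the two turning points $x_1(E), x_2(E)$ collapse onto $x=0$, and on a window of size $\ord(h^{2/3})$ around the crossing the natural Airy models involve $\ai(\tau_1^{-2/3}(\tau_1 x-E))$ and $\ai(\mp\tau_2^{-2/3}(\tau_2 x+E))$. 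From these scalar ingredients, $\Wm(E)$ and $\Wp(E)$ are constructed by treating $hW$ as a perturbation: a Volterra-type fixed-point equation, whose kernel is built from the scalar Green functions of $P_j-E$, converges thanks to the prefactor $h$ in front of $W$, and yields quantitative bounds on the remainders at each order of iteration.

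With $\Wpm(E)$ in hand, the Wronskian expands as a decoupled piece plus higher-order terms in $hW$. The decoupled piece is the product of the $P_1$- and $P_2$-Wronskians; the latter is $E$-independent and non-vanishing on the non-trapping side, while the former vanishes exactly on the real Bohr--Sommerfeld spectrum $\mathcal{A}(E)=(k+\tfrac12)\pi h$. Taylor expanding $\mathcal{A}(E)=\mathcal{A}(0)+\mathcal{A}'(0)E+\tfrac12\mathcal{A}''(0)E^2+\ord(E^3)$ and inverting around $E=\lambda h^{2/3}$ yields the real part \eqref{reEk}, while the definition of $\lambda_k(h)$ falls out of the leading-order Bohr--Sommerfeld balance. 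The first-order correction in $hW$ furnishes the width: its leading contribution is a bilinear form in the inner Airy solutions, weighted by $r_0(0)^2$ (the piece $r_1(x)hD_x$ and higher Volterra iterations contribute only to the $\ord(h^2)$ remainder), and after the natural Airy rescaling it collapses to precisely the integrals $\mu_1,\mu_2$. An analytic implicit function theorem applied to the vanishing of the Wronskian around the real Bohr--Sommerfeld spectrum then produces \eqref{imEk}.

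The main obstacle, and the step I expect to consume most of the technical work, is the second one: the iterative construction of $\Wpm(E)$ must be carried out with error bounds strictly finer than the expected widths (so of size $o(h^{5/3})$) on a neighbourhood of the crossing where \emph{none} of the WKB directions of monotonic decay is available, and where, for the reasons explained in the introduction, the standard scalar WKB techniques break down. This requires working in norms adapted to the Airy transition zone of size $\ord(h^{2/3})$, matching them to the WKB norms outside, and exploiting the ellipticity $r_0(0)\neq 0$ to close the fixed-point argument without any loss of powers of $h$. Once this uniform control is achieved, the analysis of the quantization condition in Section~\ref{sect7} reduces to a controlled perturbative computation around the Bohr--Sommerfeld spectrum.
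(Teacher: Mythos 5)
Your proposal matches the paper's strategy in all essentials: characterising resonances by vanishing of the Wronskian of two decaying and two outgoing solutions, building those solutions by a Neumann/Volterra iteration off scalar Airy--Yafaev fundamental solutions with a $h^{2/3}$-scale transition window, and reading off the real part from the decoupled Bohr--Sommerfeld factor and the width from the leading $hW$ correction expressed through the integrals $\mu_1,\mu_2$. One small misattribution worth flagging: the hypothesis $r_0(0)\neq 0$ plays no role in closing the fixed-point argument (that convergence comes solely from the factor $h$ in front of $W$ together with the $\ord(h^{2/3})$ estimate on $h^2K_{1,L}WK_{2,L}W^*$); it enters only at the very end, to ensure the leading coefficient of the width is non-degenerate, and your $o(h^{5/3})$ precision requirement on the solutions is overly conservative — an $\ord(h)$ error on the Wronskian suffices once the implicit function theorem (with $\partial_E\cos(\mathcal{A}(E)/h)\sim h^{-1}$) is invoked.
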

\begin{remark}\sl One can always choose $k=k(h)\to +\infty$ in such a way that $\lambda_k(h)\to \rho_0$ as $h\to 0_+$, where $\rho_0\in[-C_0,C_0]$ is fixed arbitrarily. In this case, (\ref{imEk}) gives
$$
\im \,E_k(h)= -\frac{2\pi^2r_0(0)^2(\mu_1(\rho_0)^2+\mu_2(\rho_0)^2)}{{\mathcal A}'(0)}h^{5/3}+o (h^{5/3}).
$$
In particular, if $(\mu_1(\rho_0), \mu_2(\rho_0))\not= (0,0)$, then the result produces an equivalent of the width of the resonance as $h\to 0_+$, and it is of order $h^{5/3}$. Let us observe that one has,
$$
\begin{aligned}
\mu_1(t)+\mu_2(t)& = \int_{-\infty}^{+\infty}\ai (\tau_1^{-2/3}(\tau_1y-t))\ai (-\tau_2^{-2/3}(\tau_2y+t))dy\\
& = (\tau_1+\tau_2)^{-1/3} \ai (-(\tau_1^{-1}+\tau_2^{-1})^{2/3}t),
\end{aligned}
$$
(where the last identity comes from an elementary computation involving the Fourier transform of $\ai$ and interpreting $\mu_1+\mu_1$ as a convolution of two Airy-type functions) and thus, the possible zeros of $\mu_1(\rho_0)^2+\mu_2(\rho_0)^2$ are among those of the function $t\mapsto \ai (-(\tau_1^{-1}+\tau_2^{-1})^{2/3}t)$. In particular, they are necessarily positive. In  the case where $\tau_1=\tau_2$, this phenomenon does occur exactly at the zeros of $\ai (-2^{\frac23}\tau_1^{-\frac23}t)$, and for these special values of $\rho_0$ the result just says that $\im \,E_k(h)$ is $o (h^{5/3})$.
\end{remark}
\begin{remark}\sl 
\label{rempseudo}
Under our assumptions, it can be proved that there exists an analytic distortion $\widetilde P_2$ of $P_2$ such that,  for $z\in {\mathcal D}_h(C_0)$, one has $\|(\widetilde P_2-z)^{-1}\|_{{\mathcal L}(L^2)} =\ord (h^{-1})$, while the corresponding analytic distortion $\widetilde P_1$ of $P_1$ satisfies $\|(\widetilde P_1-z)^{-1}\|_{{\mathcal L}(L^2)} =\ord ({\rm dis}(z,\sigma(P_1))^{-1}$.
Using the ellipticity of $\widetilde P_1$ and $\widetilde P_2$, and denoting by $\widetilde W$ and ${\widetilde W}^*$ the distorted operators obtained form $W$ and $W^*$, we deduce that, if  in addition $\min |z-\lambda_k(h)|\geq \delta h$ for some $\delta >0$ constant, and if $\sup(|r_0|+|r_1|)$ is sufficiently small, then,
$$
\|K(z)\|:=\|h^2(\widetilde P_2 -z)^{-1}\widetilde W^*(\widetilde P_1 -z)^{-1}\widetilde W\|\leq 1/2.
$$
Therefore, observing that the equation $(\widetilde P-z)u=v$ is equivalent to $u_1=(\widetilde P_1 -z)^{-1}(v_1-h\widetilde W u_2)$ and $(1-K(z))u_2=(\widetilde P_2 -z)^{-1}(v_2-h\widetilde W^*(\widetilde P_1 -z)^{-1}v_1)$, we conclude that if the distance between $z\in{\mathcal D}_h(C_0)$ and the eigenvalues of $\widetilde P$ (that are the resonances of $P$) is at least of order $h$, then one has,
\be
(\widetilde P -z)^{-1}=\ord (h^{-1}).
\ee
Since the resonances of $P$ are separated from each other by a distance of order $h$,  one can apply the  standard perturbation theory and conclude that if $\widetilde B$ is a uniformly bounded operator on $L^2\oplus L^2$, then the eigenvalues of $\widetilde P +h^2\widetilde B$ in ${\mathcal D}_h(C_0)$ differ from those of $\widetilde P$ by $\ord(h^2)$.
In particular, in this situation our result remains valid if we perturb $P$ by a semiclassical pseudodifferential operator $h^2B=h^2b(x,hD_x)$ where $b$ is  a bounded analytic symbol  in $\Gamma \times\Gamma$. This is typically what happens after the Feshbach reduction in the Born-Oppenheimer reduction process: see, e.g., \cite{MaMe, KMSW, MaSo}.
\end{remark}

\section{Fundamental solutions}
\label{sect3}
In order not to complicate too much the notations, we write the proof in the case
 $\tau_1=\tau_2=1$ only, but it is clear that all our treatment can be performed with any other positive values of these parameters. At the end of the proof, we explain where the changes occur exactly.

From now on we fix $\theta >0$ small enough, and in the sequel we will use the following notations :
$$
I_L:= (-\infty, 0]=\R_-\quad ; \quad I_R^\theta:=F_\theta ([0, +\infty))=F_\theta (\R_+),
$$
where $F_\theta (x):= x+i\theta f(x)$, with $f\in C^{\infty}(\R_+,\R_+)$, $f(x)=x$ for $x$ large enough, $f(x)=0$ for $x\in[0,x_\infty]$ for some $x_\infty>0$, and $f$ is chosen in such a way that, for any $x\geq x_\infty$, one has,
\be
\label{decItheta}
\im \int_{x_\infty}^{F_\theta (x)}\sqrt{ E-V_2(t)} dt \geq -Ch,
\ee
for some positive constant $C$, where the first integral is taken along $I_R^\theta$ (observe that, for $E\in {\mathcal D}_h(C_0)$ and $C>0$ sufficiently large, one always has $\im \int_{Ch^{\frac23}}^{x_\infty}\sqrt{ E-V_2(t)} dt=\ord (h)$).
Performing a Taylor expansion of this quantity as $\theta \to 0_+$, and using the fact that, for any $k\geq 1$, one has $t^kV_2^{(k)}(t) \to 0$  as $t\to +\infty$, we see that it is sufficient to choose $x_\infty$ sufficiently large and that $f$ satisfies,
$$
f'(t)\geq \delta (t^{-1}f(t)+t^{-3}f(t)^3)
$$ 
for some $\delta >0$ constant. By taking $f$ strictly increasing on $[x_\infty, +\infty)$, we see that the only possible problem is near $x_\infty$, but there one can take for instance $f(s):=e^{-1/(x-x_\infty)^2}$.

\subsection{Fundamental solutions on $I_L$}
\label{sect3}

On $I_L := (-\infty, 0]$, and for $E\in {\mathcal D}_h(C_0)$ and  $j=1,2$, let $u_{j,L}^\pm$ be the solutions to 
$(P_j - E)u = 0$ constructed in Appendix \ref{appendix2} (in particular, $u_{j,L}^-$ decays exponentially at $-\infty$, while $u_{j,L}^+$ grows exponentially).
Then, the Wronskian $\W[u_{j,L}^-, u_{j,L}^+]$ is independent of the variable $x$ and satisfies 
\begin{equation}\label{wronsL}
\W[u_{j,L}^-, u_{j,L}^+] = \frac{-2}{\pi h^{\frac23}}(1+\ord(h)) \qquad (h \to 0).
\end{equation}

For any $k\geq 0$ integer, we set,
$$
C^k_b(I_L) := \{ u\, :\ I_L \to \C \mbox{ of class } C^k \,;\, \sum_{0\leq j\leq k}\sup_{x\in I_L}|u^{(j)}(x)| < +\infty \},
$$
equipped with the norm $\pal u \pal_{C^k_b(I_L)} := \sum_{0\leq j\leq k}\sup_{I_L}|u^{(j)}|$, 
and we define a fundamental solution
$$K_{j,L}\, :\, C^0_b(I_L) \to C^2_b(I_L)\qquad (j=1,2),$$
of $P_j - E$ on $I_L$ by setting, for $v\in C^0_b(I_L)$,
\be
\label{eq1}
\begin{aligned}
K_{j,L}[v](x) := \frac{u_{j,L}^+(x)}{h^2 \W[u_{j,L}^+,u_{j,L}^-]} \int_{-\infty}^x & u_{j,L}^-(t)v(t)\,dt \\
& + \frac{u_{j,L}^-(x)}{h^2 \W[u_{j,L}^+,u_{j,L}^-]} \int_x^{0}\!\!\!\! u_{j,L}^+(t)v(t)\,dt.
\end{aligned}
\ee
Then, $K_{j,L}$ satisfies,
$$
(P_j-E)K_{j,L}={\mathbf 1},
$$
and, because of the form of the operator $W$, an integration by parts shows that we also have,
$$K_{j,L}W,\,\, K_{j,L}W^*\, :\, C^0_b(I_L) \to C^0_b(I_L)\qquad (j=1,2).$$

In view of the construction of solutions to the system, we prove,
\begin{proposition}\sl
\label{claim} As $h$ goes to $0_+$,
one has,
\be
\label{estnormK2L}
\pal hK_{2,L}W^* \pal_{{\mathcal L}(C^0_b(I_L))} = \ord (h^{\frac{1}{3}});
\ee
\be
\label{estnormM1L}
\pal h^2K_{1,L}WK_{2,L}W^* \pal_{{\mathcal L}(C^0_b(I_L))} = \ord (h^{\frac23}).
\ee
\end{proposition}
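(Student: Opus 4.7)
The strategy rests on three ingredients: the explicit Green's-function representation (3.2), the Wronskian estimate (3.1) (which gives the outer prefactor $1/(h^2\W_{j,L})=\ord(h^{-4/3})$), and the uniform semiclassical asymptotics of the $u_{j,L}^\pm$ and of their semiclassical derivatives $hD_xu_{j,L}^\pm$, both in the WKB regime and in the Airy-scaled neighbourhood of width $\ord(h^{2/3})$ around each turning point, as provided in the appendix. Typically the WKB bounds read $|u_{j,L}^\pm(t)|\lesssim h^{1/6}|V_j-E|^{-1/4}e^{\pm\phi_j(t)/h}$ and $|hD_tu_{j,L}^\pm(t)|\lesssim h^{1/6}|V_j-E|^{+1/4}e^{\pm\phi_j(t)/h}$, while in the Airy region these same quantities are $\ord(1)$ and $\ord(h^{1/3})$ respectively. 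The outer $\ord(h^{-4/3})$ prefactor is compensated by the fact that Laplace's method applied to $e^{\pm\phi/h}$ gains a factor $h/\sqrt{V_j-E}$ per inner integral, and that the dangerous Airy neighbourhood has length only $\ord(h^{2/3})$.

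To prove (3.4), I would first split $W^*=\overline{r_0}+hD_x\overline{r_1}$, so that
\[
K_{2,L}W^*v \;=\; K_{2,L}[\overline{r_0}v]+K_{2,L}[hD_x(\overline{r_1}v)].
\]
The first piece is controlled by direct substitution into (3.2): the WKB contribution is $\ord((V_2-E)^{-1})\|v\|_\infty$ after Laplace's method, matching the Airy-region contribution $\ord(h^{-2/3})\|v\|_\infty$ at the edge $|x|\sim h^{2/3}$, whence $\|K_{2,L}[\overline{r_0}v]\|_\infty=\ord(h^{-2/3})\|v\|_\infty$. For the second piece, I would integrate by parts in $t$ inside each of the two integrals of (3.2), transferring $hD_t$ onto $u_{2,L}^\pm$. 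The endpoint contributions at the common point $x$ cancel, since $u_{2,L}^+(x)u_{2,L}^-(x)$ appears with opposite signs in the two halves; those at $-\infty$ vanish by the exponential decay of $u_{2,L}^-$; and the only surviving boundary term is at $0$, namely
\[
-\,\frac{ih\,\overline{r_1}(0)\,v(0)}{h^2\W_2}\,u_{2,L}^-(x)\,u_{2,L}^+(0) \;=\; \ord(h^{-1/3})\|v\|_\infty.
\]
The remaining integrals, now involving $hD_tu_{2,L}^\pm$ in place of $u_{2,L}^\pm$, are handled by the same WKB/Airy split and are again $\ord(h^{-2/3})\|v\|_\infty$. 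Multiplying through by $h$ yields (3.4).

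For (3.5), I would write $h^2K_{1,L}WK_{2,L}W^* = (hK_{1,L}W)\circ(hK_{2,L}W^*)$ and combine (3.4) with the twin estimate $\|hK_{1,L}W\|_{\cL(C_b^0(I_L))}=\ord(h^{1/3})$, proved by the same procedure applied to $P_1$. The novelty there is that $P_1$ has two turning points $x^*$ and $0$ on $I_L$, so that $u_{1,L}^\pm$ is exponentially decaying or growing on $(-\infty,x^*)$ and oscillatory on $(x^*,0)$; but the pointwise WKB bounds above still apply (with absolute values replacing the exponentials in the oscillating range), the integration-by-parts argument produces the very same boundary term at $0$, and the two Airy regions around $x^*$ and $0$ are treated uniformly by the Airy-matched asymptotics of the appendix. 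The main technical obstacle, common to both (3.4) and (3.5), is the rigorous management of the turning-point regions: one must verify that the Airy-width $\ord(h^{2/3})$ really does absorb the $\ord(h^{-4/3})$ Wronskian prefactor, and that the various boundary terms generated by the integrations by parts combine with the integral terms so as to yield the stated orders without losing any power of $h$. Once this uniform turning-point analysis is under control, composition gives (3.5) automatically.
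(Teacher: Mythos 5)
Your treatment of \eqref{estnormK2L} is essentially the paper's: Green's-function representation, integration by parts to absorb $hD_x$, a boundary term at $0$, and a WKB/Airy case analysis exploiting the exponential decay of $u_{2,L}^\pm$ on $I_L$ (the paper packages this via the kernel $U_2(x,t)$ built from $\widetilde u_{2,L}^\pm$, then shows $\sup_x\int U_2(x,t)\,dt=\ord(h^{2/3})$, which with the $\ord(h^{-1/3})$ prefactor gives $\ord(h^{1/3})$).

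The proof of \eqref{estnormM1L} has a genuine gap. You propose to write $h^2K_{1,L}WK_{2,L}W^*=(hK_{1,L}W)\circ(hK_{2,L}W^*)$ and to invoke a ``twin estimate'' $\|hK_{1,L}W\|_{\cL(C_b^0(I_L))}=\ord(h^{1/3})$, claimed to follow ``by the same procedure applied to $P_1$''. That twin estimate is false on $I_L$. On the left half-line $P_1$ has a genuine potential well: $V_1<0$ on $(x^*,0)$, so $u_{1,L}^\pm$ are \emph{oscillatory} (no exponential decay) over an interval of length $\ord(1)$. There the kernel $U_1(x,t)$ is only $\ord(h^{1/3})$ (and $\ord(1)$ at worst, near the two turning points $x^*$, $0$), so $\sup_x\int_{-\infty}^0 U_1(x,t)\,dt$ is $\ord(h^{1/3})$ at best, and after multiplying by the $\ord(h^{-1/3})$ prefactor one obtains only $\|hK_{1,L}W\|=\ord(1)$. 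Composing would then yield $\ord(1)\cdot\ord(h^{1/3})=\ord(h^{1/3})$, strictly weaker than the stated $\ord(h^{2/3})$. (The analogous $\ord(h^{1/3})$ bound does hold for $hK_{1,R}W$ on $I_R^\theta$ -- that is the content of Proposition \ref{claim2} -- precisely because $V_1>0$ on $\R_+$ and $u_{1,R}^\pm$ are exponential there; you appear to have transplanted that estimate to the wrong half-line.)

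The paper sidesteps this obstruction by estimating the composed kernel directly: it bounds the double integral $\iint_{I_L\times I_L} U_1(x,t)\,U_2(t,s)\,dt\,ds$ and shows it is $\ord(h^{4/3})$, not merely the $\ord(h)$ that the product of the two single-kernel integrals would give. The extra $h^{1/3}$ gain comes from the fact that $U_2(t,s)$ is exponentially localized near $t,s\approx0$ (because $u_{2,L}^-$ decays away from $0$), so the large $\ord(1)$ values of $U_1(x,t)$ in the bulk oscillatory region are suppressed by the $U_2$ factor, and only the Airy neighbourhood of $0$ (where $t$ ranges over a set of measure $\ord(h^{2/3})$ with the appropriate $|t|^{-1/4}$ weight) contributes effectively. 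This interplay between the two kernels cannot be captured by a naive product of operator norms, which is why the direct double-integral estimate is essential rather than a convenience.
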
 
\begin{proof} For $j=1,2$, we set,
\be
\begin{aligned}
\label{defUj}
& \widetilde u_{j,L}^\pm(x) := \max \{|u_{j,L}^\pm(x)|, |h\partial_xu_{j,L}^\pm(x)|\};\\
& U_j(x,t):= \widetilde u_{j,L}^+(x) \widetilde u_{j,L}^-(t){\bf 1}_{\{t<x\}}+  \widetilde u_{j,L}^-(x) \widetilde u_{j,L}^+(t){\bf 1}_{\{t>x\}} =U_j(t,x).
\end{aligned}
\ee
Thanks to our choice of $K_{j,L}$, and doing an integration by parts,  we see that we have,
\be
\label{estK2L}
\begin{aligned}
& |hK_{1,L}Wv(x)|=\ord (h^{-\frac13})\left( \int_{-\infty}^0U_1(x,t)|v(t)|dt + hU_1(x,0)|v(0)|\right);\\
& |hK_{2,L}W^*v(x)|=\ord (h^{-\frac13})\left( \int_{-\infty}^0U_2(x,t)|v(t)|dt + hU_2(x,0)|v(0)|\right).
\end{aligned}
\ee
In particular,
\be
\label{normK2L}
\pal hK_{2,L}W^* \pal=\ord (h^{-\frac13})\sup_{x\in I_L} \int_{-\infty}^0U_2(x,t)dt + \ord (h^{\frac23})\sup_{x\in I_L} U_2(x,0).
\ee
Moreover, using  the asymptotics of $u_{2,L}^\pm$ and $h\partial_xu_{2,L}^\pm$ on $I_L$, and fixing some constant $C_1>0$ sufficiently large, we obtain,
\begin{itemize}
\item If $x,t\leq -C_1h^{\frac23}$, then,
\be
\label{prop1U2}
U_2(x,t)=\ord (h^{\frac13})\frac{e^{-\left| \re \int_t^x  (V_2-E)^{1/2}\right| /h}}{|(V_2(x)-E)(V_2(t)-E)|^{\frac14}};
\ee
\item If $t\leq -C_1h^{\frac23}\leq x\leq 0$, then,
\be
\label{prop2U2}
U_2(x,t)=\ord (h^{\frac16})\frac{e^{-\left| \re \int_0^t  (V_2-E)^{1/2}\right| /h}}{|(V_2(t)-E)|^{\frac14}};
\ee
\item If $x,t\in [-C_1h^{\frac23}, 0]$,  then $U_2(x,t)=\ord (1)$.
\end{itemize}
Hence  $U_2(x,t)=\ord(1)$ uniformly, and when $x\leq -\delta$ with $\delta >0$ constant, there exists a positive constant $\alpha$ such that,
$$
\int_{-\infty}^0U_2(x,t)dt=\int_{-\infty}^{-\delta /2}e^{-\alpha |x-t|/h}dt +\ord (e^{-\alpha /h})=\ord (h).
$$
On the other hand, if $\delta$ is chosen sufficiently small and $x\in[-\delta, -C_1h^{\frac23}]$, then, there exists a (different) positive constant $\alpha$ such that,
$$
\begin{aligned}
\int_{-\infty}^0U_2(x,t)dt &=\int_{-2\delta}^{-C_1h^{\frac23}}U_2(x,t)dt +\ord (h^{\frac23})\\
&=\ord (h^{\frac13}|x|^{-\frac14})\int^{2\delta}_{C_1h^{\frac23}}\frac{e^{-\alpha \left|t^{\frac32}-|x|^{\frac32}\right|/h}}{t^{\frac14}}dt+\ord (h^{\frac23}).
\end{aligned}
$$
Setting $t=(hs)^{\frac23}$ in the integral, we obtain,
$$
\int_{-\infty}^0U_2(x,t)dt=\ord(h^{\frac13}|x|^{-\frac14}h^{\frac23-\frac16})\int_1^\infty \frac{e^{-\alpha \left|s-h^{-1}|x|^{\frac32}\right|}}{\sqrt s}ds+\ord(h^{\frac23})=\ord(h^{\frac23}).
$$
Finally, when $x\in [-C_1h^{\frac23},0]$, we have,
$$
\begin{aligned}
\int_{-\infty}^0U_2(x,t)dt &=\int_{-\delta}^{-C_1h^{\frac23}}U_2(x,t)dt +\ord (h^{\frac23})\\
&=\ord (h^{\frac16})\int^{\delta}_{C_1h^{\frac23}}\frac{e^{-\alpha t^{\frac32}/h}}{t^{\frac14}}dt+\ord (h^{\frac23}) =\ord(h^{\frac23}).
\end{aligned}
$$
Thus, we have prove,
\be
\label{intU2}
\sup_{x\leq 0}\int_{-\infty}^0 U_2(x,t)dt =\ord (h^{\frac23}),
\ee
and, by (\ref{normK2L}) (and the fact that $U_2=\ord (1)$), (\ref{estnormK2L}) follows.

Now, let us prove the estimate for $M_L:= h^2K_{1,L}WK_{2,L}W^*$. We see on the definition of $K_{1,L}$ and on (\ref{estK2L}) that we have,
\be
\label{decompML}
\begin{aligned}
|M_Lv(x)|\leq & Ch^{-\frac23}\int_{-\infty}^0\int_{-\infty}^0 U_1(x,t)U_2(t,s)|v(s)|ds dt \\
& +Ch^{\frac13}\int_{-\infty}^0 U_1(x,t)U_2(t,0)|v(0)|dt\\
& +Ch^{\frac13}U_1(x,0)\int_{-\infty}^0 U_2(0,t)|v(t)|dt\\
& +Ch^{\frac43}U_1(x,0)U_2(0,0)|v(0)|.
\end{aligned}
\ee
Using (\ref{intU2}) and the fact that  $U_j=\ord (1)$ uniformly ($j=1,2$), we see that the last three terms are $\ord (h) \sup_{I_L}|v|$ (observe that $U_j(t,x)=U_j(x,t)$ for all $x,t$).

In order to estimate the integral, we use the following properties of $U_1$:
For any $\delta>0$ small enough, there exists $\alpha >0$ constant, such that,
\begin{itemize}
\item If $x,t\leq x^*-\delta$, then,
$$
U_1(x,t)=\ord (h^{\frac13})e^{-\alpha |t-x|/h};
$$
\item If $t\leq x^*-2\delta$ and $x\in[x^*-\delta, 0]$, then,
$$
U_1(x,t)=\ord (h^{\frac16}e^{-\alpha /h});
$$
\item If $x\in [x^*-4\delta,0]$ and $t\in [-\delta, -C_1h^{\frac23}]$,  then $U_1(x,t)=\ord (h^{\frac16}|t|^{-\frac14})$;
\item If $x\in [x^*-4\delta,0]$ and $t\in [-C_1h^{\frac23},0]\cup [x^*-4\delta, -\delta]$,  then $U_1(x,t)=\ord (1)$.
\end{itemize}

Moreover, by the properties of $U_2$, we also know that any part of the integral corresponding to $|t-s|\geq \delta$ with $\delta >0$ constant  is exponentially small.

We first consider the case $x\in (-\infty, x^*-2\delta]$ for some small positive constant $\delta$. 

Then,  we see that there exists a constant $\alpha >0$ such that,
$$
\begin{aligned}
\iint_{-\infty}^0 U_1(x,t)U_2(t,s)dtds &=\ord (h^{\frac23})\int_{-\infty}^{x^*-\delta}dt \int_{-\infty}^{x^*-\delta/2} e^{-\alpha (|t-x|+|s-t|)/h}ds\\
&\hskip 6cm+\ord(e^{-\alpha /h})\\
& =\ord(h^{2+\frac23}).
\end{aligned}
$$

Now, when $x\in[x^*-2\delta,0]$, and still denoting by $\alpha$ every new positive constant that may appear, we have,
$$
\iint_{-\infty}^0 U_1(x,t)U_2(t,s)dtds =\int_{x^*-3\delta}^0dt \int_{x^*-4\delta}^0U_1(x,t)U_2(t,s)ds+\ord(e^{-\alpha /h}),
$$
and,
$$
\begin{aligned}
\int_{x^*-3\delta}^0dt \int_{x^*-4\delta}^0& U_1(x,t)U_2(t,s)ds \\
=&\ord(h^{\frac13})\int_{x^*-3\delta}^{-\delta}dt\int_{x^*-4\delta}^{-\delta/2}e^{-\alpha|t-s|/h}ds +\ord(e^{-\alpha/h})\\
&+\ord(h^{\frac12})\int_{-\delta}^{-C_1h^{\frac23}}dt\int_{-2\delta}^{-C_1h^{\frac23}}\frac{e^{-\alpha \left| |t|^{\frac32}-|s|^{\frac32}\right|/h}}{|t|^{\frac12}|s|^{\frac14}}ds\\
&+\ord(h^{\frac16})\int_{-\delta}^{-C_1h^{\frac23}}dt\int_{-C_1h^{\frac23}}^0\frac{e^{-\alpha |t|^{\frac32}/h}}{|t|^{\frac14}}ds\\
&+\ord(h^{\frac16})\int_{-C_1h^{\frac23}}^0 dt\int_{-\delta}^{-C_1h^{\frac23}}\frac{e^{-\alpha |s|^{\frac32}/h}}{|s|^{\frac14}}ds
+\ord(h^{\frac43}).
\end{aligned}
$$
Hence,
\be
\label{estinterm1}
\begin{aligned}
\int_{x^*-3\delta}^0dt \int_{x^*-4\delta}^0U_1(x,t)U_2(t,s)ds =&\ord(h^{\frac12})\int^{\delta}_{C_1 h^{\frac23}}dt\int^{2\delta}_{C_1h^{\frac23}}\frac{e^{-\alpha \left| t^{\frac32}-s^{\frac32}\right|/h}}{t^{\frac12}s^{\frac14}}ds\\
&+\ord(h^{\frac56})\int^{\delta}_{C_1h^{\frac23}}\frac{e^{-\alpha t^{\frac32}/h}}{t^{\frac14}}dt
+\ord(h^{\frac43}).
\end{aligned}
\ee
For the first term, the change of variables $(t,s)\mapsto (t^{2/3},s^{2/3})$ gives,
$$
\int^{\delta}_{C_1h^{\frac23}}dt\int^{2\delta}_{C_1h^{\frac23}}\frac{e^{-\alpha \left| t^{\frac32}-s^{\frac32}\right|/h}}{t^{\frac12}s^{\frac14}}ds
={\mathcal O}(1)\iint_{C_2 h}^{\delta'}\frac{e^{-\alpha |t-s|/h}}{t^{2/3} s^{1/2}}dtds,
$$
with $C_2:=C_1^{2/3}$ and $\delta':=(2\delta)^{2/3}$.
Dividing this integral in two parts, depending whether $t\leq s$ or $s\leq t$, and first integrating with respect to the larger of the two variables, we obtain,
\be
\begin{aligned}
\label{estintspec}
\int^{\delta}_{C_1h^{\frac23}}dt\int^{2\delta}_{C_1h^{\frac23}}\frac{e^{-\alpha \left| t^{\frac32}-s^{\frac32}\right|/h}}{t^{\frac12}s^{\frac14}}ds
 &={\mathcal O}(1)\int_{C_2 h}^{\delta'}dt \frac{e^{\alpha t/h}}{t^{7/6}}\int_t^{\delta'}e^{-\alpha s/h}ds\\
 & ={\mathcal O}(h^{5/6}). 
 \end{aligned}
\ee
Moreover, a simple change of variable gives,
$$
\int^{\delta}_{C_1h^{\frac23}}\frac{e^{-\alpha t^{\frac32}/h}}{t^{\frac14}}dt=\ord (h^{\frac12}).
$$
Inserting into (\ref{estinterm1}), we deduce that, for $x\in[x^*-2\delta,0]$, we have,
\be
\label{estdbleintU2}
\iint_{-\infty}^0 U_1(x,t)U_2(t,s)dtds =\ord (h^{4/3}).
\ee
Finally, going back to (\ref{decompML}), we conclude (\ref{estnormM1L}).
\end{proof}

\subsection{Fundamental solutions on $I_R^\theta$}

On $\R_+$,  let $u_{1,R}^\pm$ (resp. $u_{2,L}^\pm$) be the solutions to
$(P_1 - E)u = 0$ (resp. $(P_2-E)u=0$) constructed in Appendix \ref{appendix2}. Setting,
\be
\label{defu2Rpm}
u_{2,R}^\pm := 2^{\pm \frac12}e^{i\frac{\pi}4}(\frac12 a_2^-u_{2,L}^-\pm ia_2^+ u_{2,L}^+)
\ee
by Proposition \ref{yaf2} we have,
$$
u_{2,R}^\pm(x) \sim (1+\ord (h)) \frac{h^{\frac16}e^{i\frac{\pi}4}}{\sqrt{2\pi}}(E-V_2(x))^{-1/4} e^{\mp i \int_0^x\sqrt{ E-V_2(t)}\, dt/h}\qquad (x \to + \infty).
$$

Then, all these solutions can be extended as holomorphic functions in a complex sector around $\R_+$, and in particular (thanks to (\ref{decItheta})), $u_{j,R}^-$  ($j=1,2$) remains uniformly bounded on $I_R^\theta$, and exponentially decaying at infinity along $I_R^\theta$. Moreover,  the Wronskians $\W[u_{j,R}^-, u_{j,R}^+]$ are independent of the variable $x$ and satisfy, 
\begin{equation}\label{wronsR}
\W[u_{j,R}^-, u_{j,R}^+] = \frac{2}{\pi h^{\frac23}}(1+\ord(h)) \qquad (h \to 0).
\end{equation}
We define a fundamental solution of $P_j - E$ on $I_R^\theta$ as,
\be\label{eq1R}
\begin{aligned}
K_{j,R}[v](x) :=  \frac{u_{j,R}^-(x)}{h^2 \W[u_{j,R}^-,u_{j,R}^+]} &\int_{0}^x u_{j,R}^+(t)v(t)\,dt \\
&+ \frac{u_{j,R}^+(x)}{h^2 \W[u_{j,R}^-,u_{j,R}^+]} \int_x^{+\infty}\!\!\!\! u_{j,R}^-(t)v(t)\,dt,
\end{aligned}
\ee
where  $v$ is in the space $C_b^0(I_R^\theta)$ of bounded continuous functions on $I_R^\theta$, and the integrals run over $I_R^\theta$.

Then, using the semiclassical asymptotic behaviour of $u_{j,R}^\pm$ on $I_R^\theta$, one can prove exactly as for (\ref{estnormK2L})-(\ref{estnormM1L}) ($x_\infty$ playing the role of $x^*$) that we have,
\begin{proposition}\label{claim2}\sl One has,
\be
\label{estnormK1R}
\pal hK_{1,R}W \pal_{{\mathcal L}(C_b^0(I_R^\theta))} = \ord ( h^{\frac{1}{3}});
\ee
\be
\label{estnormM2R}
\pal h^2K_{2,R}W^*K_{1,R}W \pal_{{\mathcal L}(C_b^0(I_R^\theta))} = \ord (h^{\frac23}),
\ee
uniformly as $h$ goes to $0_+$.
\end{proposition}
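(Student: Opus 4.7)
The plan is to mimic the proof of Proposition~\ref{claim}, exchanging the roles of the two scalar equations and substituting the distorted contour $I_R^\theta$ for the real half-line $I_L$. First I would define, by analogy with (\ref{defUj}),
\[
\widetilde u_{j,R}^\pm(x):=\max\{|u_{j,R}^\pm(x)|,\,|h\partial_x u_{j,R}^\pm(x)|\},
\]
and let $U_j^R(x,t)$ be the symmetric kernel built from these functions exactly as in (\ref{defUj}). Using the explicit form of $K_{j,R}$ in (\ref{eq1R}) together with an integration by parts against the first-order operator $W=r_0+r_1hD_x$, one obtains the analogues of (\ref{estK2L}),
\[
|hK_{1,R}Wv(x)|=\ord(h^{-1/3})\Bigl(\int_{I_R^\theta}U_1^R(x,t)|v(t)|\,|dt|+h\,U_1^R(x,0)|v(0)|\Bigr),
\]
and likewise for $hK_{2,R}W^{*}v$, so that the two bounds reduce to integral estimates on the kernels $U_j^R$.

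For (\ref{estnormK1R}), the key input is $\sup_{x\in I_R^\theta}\int_{I_R^\theta}U_1^R(x,t)\,|dt|=\ord(h^{2/3})$ with $U_1^R=\ord(1)$ uniformly. Here $u_{1,R}^\pm$ is governed by the positive potential $V_1$ on $(0,+\infty)$ with a simple turning point at $0$, so on the real segment of $I_R^\theta$ its semiclassical asymptotics is structurally identical to that of $u_{2,L}^\pm$ on $I_L$ (with $0$ replacing the right endpoint as the turning point). Along the distorted tail $F_\theta((x_\infty,+\infty))$ the ellipticity of $V_1$ persists and the decay only improves. One therefore obtains three-regime pointwise bounds on $U_1^R$ completely parallel to (\ref{prop1U2})--(\ref{prop2U2}), and the change of variable $t=(hs)^{2/3}$ yields the required $\ord(h^{2/3})$ estimate, exactly as in the derivation of (\ref{intU2}).

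For (\ref{estnormM2R}), the composition is dominated, as in (\ref{decompML}), by a double integral against $U_2^R(x,t)U_1^R(t,s)$ plus three boundary contributions at $0$ that are $\ord(h)$ by the previous step. Now $U_1^R(t,s)$ carries exponential decay away from its turning point at $0$, with an $\ord(h^{1/6}|t|^{-1/4})$ Airy-type weight in the transition zone, while $U_2^R(x,t)$ is purely oscillatory on $[0,x_\infty]$ and acquires genuine exponential decay on the complex part of $I_R^\theta$ by (\ref{decItheta}); thus $x_\infty$ plays precisely the role that $x^*$ played for $U_1$ on the left, the complex distortion replacing the sign change of the potential. The only nontrivial contribution is the Airy transition integral, handled once more by the change of variables $(t,s)\mapsto(t^{2/3},s^{2/3})$ and by splitting $\{t\le s\}$ versus $\{s\le t\}$ as in (\ref{estintspec}), which again produces $\ord(h^{5/6})$. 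Summing the regimes gives $\iint U_2^R(x,t)U_1^R(t,s)\,|dt||ds|=\ord(h^{4/3})$, hence (\ref{estnormM2R}).

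The main obstacle, beyond the bookkeeping, is to ensure uniformity in $x\in I_R^\theta$ across the junction $x_\infty$ where the contour leaves the real axis. This requires that the asymptotics for $u_{j,R}^\pm$ and $h\partial_x u_{j,R}^\pm$ established in Appendix~\ref{appendix2} extend holomorphically along the distorted contour with constants uniform in $E\in{\mathcal D}_h(C_0)$, and that (\ref{decItheta}) really converts the real-axis oscillation of $u_{2,R}^\pm$ into genuine exponential decay beyond $x_\infty$---which is exactly why $f$ was chosen to satisfy the differential inequality $f'(t)\ge\delta(t^{-1}f(t)+t^{-3}f(t)^3)$ just after (\ref{decItheta}).
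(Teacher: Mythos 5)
Your proposal is correct and takes essentially the same route as the paper, which simply invokes the proof of Proposition~\ref{claim} with $x_\infty$ playing the role of $x^*$: the elliptic factor $u_{1,R}^\pm$ replaces $u_{2,L}^\pm$, and the complex distortion beyond $x_\infty$ supplies the exponential decay for $u_{2,R}^\pm$ that $V_1>0$ provided on $(-\infty,x^*)$. Your closing observation on the differential inequality imposed on $f$ just after (\ref{decItheta}) correctly identifies the point that makes this analogy rigorous across the junction.
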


\section{Solutions on $I_L$ and $I_R^\theta$}
\label{sect4}

In this section, we use the previous constructions of fundamental solutions in order to construct solutions to the system (\ref{sch}) in $I_L$ and $I_R^\theta$.

We first consider the interval $I_L$. We set,
$$
w_{1,L}^0:=\left(\begin{array}{c} u_{1,L}^-\\0\end{array}\right)\quad ;\quad w_{2,L}^0:=\left(\begin{array}{c} 0\\u_{2,L}^-\end{array}\right),
$$
and we look for solutions $u:=\left(\begin{array}{c} u_1\\u_2\end{array}\right)$ to (\ref{sch}) in $I_L$, that are close to $w_{1,L}^0$ or $w_{2,L}^0$ as $h\to 0$. We rewrite (\ref{sch}) as,
$$
\left\{
\begin{array}{l}
(P_1-E)u_1 =-hWu_2;\\
(P_2-E)u_2=-hW^*u_1.
\end{array}
\right.
$$
Assuming that $u_1$ is in $C^0_b (I_L)$, we set $u_2=-hK_{2,L} W^*u_1$, so that the system reduces to,
$$
(P_1-E)u_1=h^2WK_{2,L}W^*u_1,
$$
and a solution will be given by any $u_1\in C^0_b (I_L)$ such that,
$$
u_1 =u_{1,L}^-+h^2K_{1,L}WK_{2,L}W^*u_1.
$$
Now, by Proposition \ref{claim}, the operator $M_L:=h^2K_{1,L}WK_{2,L}W^*$ is $\cO(h^{\frac23})$ when acting on $C^0_b (I_L)$. By construction, we also know that $u_{1,L}^-$ is in $C^0_b (I_L)$. Therefore, we can define,
\be
\label{w1L}
w_{1,L}:=\left(\begin{array}{c} \sum_{j\geq 0}M_L^ju_{1,L}^-\\-hK_{2,L} W^*\sum_{j\geq 0}M_L^ju_{1,L}^-\end{array}\right)\in C^0_b (I_L)^2,
\ee
and $w_{1,L}$ is solution to (\ref{sch}) in $I_L$, with $w_{1,L}\to w_{1,L}^0$ as $h\to 0_+$. In a similar way,  we can define,
\be
\label{w2L}
w_{2,L}:=\left(\begin{array}{c} -\sum_{j\geq 0}M_L^j (hK_{1,L} Wu_{2,L}^-)\\ u_{2,L}^-+hK_{2,L} W^*\sum_{j\geq 0}M_L^j(hK_{1,L} Wu_{2,L}^-)\end{array}\right) \in C^0_b (I_L)^2,
\ee
that is solution to (\ref{sch}) in $I_L$, with $w_{2,L}\to w_{2,L}^0$ as $h\to 0_+$. 

\begin{rem}\sl
The standard WKB method (see, e.g., \cite{FuRa, Vo}) gives us asymptotic expansions for $u_{j,L}^-$ ($j=1,2$) inside $(x^*,0)$, of the form,
\be
\begin{split}
\label{asympujL}
u_{1,L}^-\sim &\frac{2h^{\frac16}}{\sqrt\pi}(E-V_1(x))^{-1/4}\sin\left( \frac{{\mathcal A}(E)+\nu_1(x)}h+\frac{\pi}4\right)\left(1+\sum_{k\geq 1}a_{1,k}(x)h^k\right) \\
& +\frac{2h^{\frac16}}{\sqrt\pi}(E-V_1(x))^{-1/4}\cos\left( \frac{{\mathcal A}(E)+\nu_1(x)}h+\frac{\pi}4\right)\sum_{k\geq 1}b_{1,k}(x)h^k; \\
u_{2,L}^-\sim &\frac{h^{\frac16}}{\sqrt\pi}(V_2(x)-E)^{-1/4}
e^{-\frac1{h}\int_x^{x_2(E)}\sqrt{V_2(t)-E}\, dt}(1+\sum_{k\geq 1}a_{2,k}(x)h^k),
\end{split}
\ee
where $x_2(E)$ is the unique solution of $V_2(x)=E$ close to 0,  $\nu_1(x):= \int_{x_1(E)}^x \sqrt{E-V_1(t)}\, dt$.
\end{rem}

Now, on $I_R^\theta$, a similar construction can be done by using Proposition \ref{claim2}, and by starting from the solutions  $u_{j,R}^-$ ($j=1,2$) to $(P_j-E)u_{j,R}^- =0$ on $I_R^\theta$.
 Setting $M_R:=h^2K_{2,R}W^*K_{1,R}W$, by Proposition \ref{claim2} we have $\| M_R\| ={\mathcal O}(h^{\frac23})$, and thus, defining,
\be
\label{w1R}
w_{1,R}:=\left(\begin{array}{c} u_{1,R}^-+hK_{1,R}W\sum_{j\geq 0}M_R^j(hK_{2,R}W^*u_{1,R}^-)\\
-\sum_{j\geq 0}M_R^j(hK_{2,R} W^*u_{1,R}^-)
\end{array}\right)\in C_b^0(I_R^\theta)^2;
\ee
\be
\label{w2R}
w_{2,R}:=\left(\begin{array}{c} 
-hK_{1,R} W\sum_{j\geq 0}M_R^ju_{2,R}^-\\ 
\sum_{j\geq 0}M_R^ju_{2,R}^-
\end{array}\right)\in C_b^0(I_R^\theta)^2,
\ee
we see that they are both solutions to  (\ref{sch}) on $I_R^\theta$, and they respectively  approach $w_{1,R}^0:=\left(\begin{array}{c} u_{1,R}^-\\0\end{array}\right)$ and $w_{2,R}^0:=\left(\begin{array}{c} 0\\ u_{2,R}^-\end{array}\right)$, as $h\to 0_+$. 

\begin{rem}\sl
Still by the standard WKB method, we have asymptotic expansions for $u_{j,R}^-$ ($j=1,2$) inside $(0,+\infty)$, of the form,
\be
\begin{split}
&u_{1,R}^-\sim \frac{\pi^{\frac12}h^{\frac16}}{2}(V_1(x)-E)^{-1/4}
e^{-\frac1{h}\int_{x_1(E)}^x\sqrt{V_1(t)-E}\, dt}(1+\sum_{k\geq 1}b_{1,k}(x)h^k);
\\
&u_{2,R}^-\sim \frac{\pi^{\frac12}h^{\frac16}}{2}(E-V_2(x))^{-1/4}
e^{\frac{i}{h}\int_{x_2(E)}^x\sqrt{E-V_2(t)}\, dt}(1+\sum_{k\geq 1}b_{2,k}(x)h^k).
\end{split}
\ee
\end{rem}
The solutions we have just constructed are not only bounded, and actually we have,
\begin{proposition}\sl
The solutions $w_{j,L}$ given by (\ref{w1L})-(\ref{w2L}), and $w_{j,R}$ given by (\ref{w1R})-(\ref{w2R})  ($j=1,2$) satisfy,
$$
w_{j,L} \in L^2(I_L)\oplus L^2(I_L)\quad ; \quad w_{j,R} \in L^2(I_R^\theta)\oplus L^2(I_R^\theta).
$$
\end{proposition}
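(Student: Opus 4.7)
My plan is to focus on $w_{1,L}$ and deduce the other three cases by the same mechanism. Since Proposition~\ref{claim} already gives boundedness of both components, what remains is to establish exponential decay at $-\infty$ on $I_L$, and (via the distorsion) at $+\infty$ along $I_R^\theta$. First, by the construction in Appendix~\ref{appendix2} and assumption (A3), there should exist $c > 0$ independent of $h$, smaller than $\min_j\sqrt{V_j(-\infty)-\re E}$ for $E\in{\mathcal D}_h(C_0)$, such that the Agmon-type bound
\begin{equation*}
|u_{j,L}^-(x)| + |h\partial_x u_{j,L}^-(x)| = \ord (h^{1/6}) e^{-c|x|/h}\qquad (x\leq -1,\ j=1,2)
\end{equation*}
holds; this already gives $u_{j,L}^- \in L^2(I_L)$.

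The key step is to show that, on the exponentially weighted space $X_c:=\{v\in C_b^0(I_L)\,;\,\sup_{x\in I_L}e^{c|x|/h}|v(x)| < +\infty\}$, the operators $hK_{2,L}W^*$ and $M_L = h^2K_{1,L}WK_{2,L}W^*$ enjoy the same $h$-gains as in Proposition~\ref{claim}. To prove this I would re-run the kernel estimates of Section~\ref{sect3} with the weight $e^{-c|t|/h}$ carried on the input $v$. In (\ref{eq1}), the first term $u_{j,L}^+(x)\int_{-\infty}^x u_{j,L}^-(t)v(t)\,dt$ receives its main contribution near $t=x$, where the Agmon behaviour of $u_{j,L}^-(t)$ combines with the weight on $v$ to produce the factor $e^{-c|x|/h}$. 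The second term $u_{j,L}^-(x)\int_x^0 u_{j,L}^+(t)v(t)\,dt$ inherits a factor $e^{-c'|x|/h}$ directly from its prefactor, with $c'>c$ by construction, which is better than required. The analysis proceeds through the same three regimes ($x,t\leq -C_1h^{2/3}$, the transition zone, and the boundary layer $[-C_1h^{2/3},0]$) that already appear in the proof of Proposition~\ref{claim}; the weighted versions of the bounds (\ref{prop1U2})-(\ref{prop2U2}) then follow.

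Once $\|M_L\|_{{\mathcal L}(X_c)} = \ord(h^{2/3})$ is established, the Neumann series (\ref{w1L}) converges in $X_c$, hence the first component of $w_{1,L}$ lies in $X_c\subset L^2(I_L)$, and applying $-hK_{2,L}W^*$ keeps us in $X_c$. The three remaining cases run in parallel: on $I_R^\theta$, the distorsion condition (\ref{decItheta}) together with (A2) yields the exponential decay of $u_{j,R}^-(F_\theta(x))$ along the contour (with $V_1>0$ at $+\infty$ handling $j=1$ in the Agmon sense, and the distorsion taking care of the oscillatory solution for $j=2$), and the weighted kernel estimates close the argument. The hard part will be the bookkeeping in these weighted kernel estimates: one must check that the exponential factor on $v$ is transferred faithfully to $K_{j,L}[v]$ uniformly in all regimes, especially in the boundary layer where the Airy-like behaviour of $u_{j,L}^\pm$ replaces the WKB asymptotics and the gain has to be extracted from the explicit integrals of Proposition~\ref{claim} rather than from simple Agmon exponentials.
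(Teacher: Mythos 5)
Your plan --- conjugate the kernel estimates of Proposition \ref{claim} by a weight and feed the weighted bounds into the Neumann series --- is exactly the template of the paper's proof, but the specific weight $e^{c|x|/h}$ with $c>0$ independent of $h$ does not survive the boundary layer, and the paper's use of $\langle x\rangle^N$ is precisely what avoids the problem you flag but do not solve. Concretely, on $[-C_1h^{2/3},0]$ one has $U_2(x,t)=\ord(1)$, and for $x=-C_1h^{2/3}$, $t\in[x,0]$ the weighted kernel carries a factor $e^{c(|x|-|t|)/h}$ ranging up to $e^{cC_1/h^{1/3}}$. Integrating, $\int_x^0 e^{c(|x|-|t|)/h}U_2(x,t)\,dt\gtrsim \tfrac{h}{c}e^{cC_1/h^{1/3}}$, and the boundary term $h\,e^{c|x|/h}U_2(x,0)|v(0)|$ gives an equal contribution; after the prefactor $h^{-1/3}$ this makes $e^{c|x|/h}|hK_{2,L}W^*v(x)|\gtrsim h^{2/3}e^{cC_1/h^{1/3}}\|v\|_{X_c}$. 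Thus $\|hK_{2,L}W^*\|_{\mathcal{L}(X_c)}$ blows up as $h\to 0$, the crucial bound $\|M_L\|_{\mathcal{L}(X_c)}=\ord(h^{2/3})$ fails, and the Neumann series does not converge in $X_c$. The obstruction is structural, not bookkeeping: a weight of size $e^{\ord(1/h^{1/3})}$ at the turning point cannot be absorbed by Airy factors that are merely $\ord(1)$ there, so the only exponential weight that could work is the true Agmon distance of $V_1-E$, which vanishes identically on $[x^*,0]$; $c|x|$ is not a lower bound for it. (A smaller inaccuracy: for the second term of (\ref{eq1}) the decay is not supplied by the prefactor $u_{j,L}^-(x)$ alone --- for merely bounded $v$ the integral $\int_x^0 u_{j,L}^+v$ grows like $h\,u_{j,L}^+(x)$ and cancels the prefactor --- but only by the weight carried by $v$.)

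The paper's argument is lighter and proves exactly what is needed: it conjugates by $\langle x\rangle^N$ (any $N$), notes that $\langle x\rangle^N\langle t\rangle^{-N}$ is $\ord(1)$ on compacts, in particular on the boundary layer, and at infinity is dominated by the exponential decay of $U_j(x,t)$ in $|x-t|$, so that (\ref{intU2}) and (\ref{estdbleintU2}) persist verbatim with $U_j$ replaced by $\langle x\rangle^N U_j(x,t)\langle t\rangle^{-N}$; since $\langle x\rangle^N u_{j,L}^-\in C_b^0(I_L)$ for every $N$, each component of $w_{j,S}$ is $\ord(\langle x\rangle^{-N})$, hence in $L^2$. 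No Agmon analysis is needed, and the weight is completely insensitive to the turning points.
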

\begin{proof}
It is sufficient to prove that, for any $N\geq 1$, one has $w_{j,S}=\ord (\langle x\rangle^{-N})$ as $|x|\to \infty$ (on $I_L$ or $I_R^\theta$, depending if $S=L$ or $S=R$). But thanks to the exponential decay of $U_j(x,t)$ ($j=1,2$) as $|x-t|\to\infty$, $|x|\gg 1$, we immediately see that (\ref{intU2}) and (\ref{estdbleintU2}) remain valid with $U_j(x,t)$ replaced by $\langle x\rangle^N U_j(x,t)\langle t\rangle^{-N}$. As a consequence, the estimates of Proposition \ref{claim}  extend to the operators $\langle x\rangle^NhK_{2,L}W^*\langle x\rangle^{-N}$ and $\langle x\rangle^Nh^2K_{1,L}WK_{2,L}W^*\langle x\rangle^{-N}$, and since $\langle x\rangle^Nu_{j,L}^-\in C_b^0(I_L)$ ($j=1,2$), the result for $w_{j,L}$ follows. The same arguments apply in $I_R^\theta$, and the result for $w_{j,R}$ follows, too.
\end{proof}

Now, by general theory on systems of ordinary differential equations, we know that the space of solutions to (\ref{sch}) that are $L^2$ in $I_L$ (resp. in $I_R^\theta)$ is at most two-dimensional. As a consequence,  
the previous proposition implies,
\begin{proposition}\label{quantcondi1}
$E$ is a resonance if and only if the four solutions
$w_{1,L}$, $w_{2,L}$, 
$w_{1,R}$ and $w_{2,R}$ are linearly dependent.
\end{proposition}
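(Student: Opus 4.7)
The plan is to exploit the fact that $(P-E)u=0$ is equivalent to a first-order linear ODE system of rank four in the variables $(u_1,u_1',u_2,u_2')$, so its solution space is four-dimensional and every local solution extends uniquely to a global one by Cauchy--Lipschitz, the coefficients being regular at $x=0$. Under this identification, the four functions $w_{1,L},w_{2,L},w_{1,R},w_{2,R}$ can be regarded as global solutions and linear dependence may be tested at any single point, for instance $x=0$.

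First I would record the linear independence of each pair. As $h\to 0_+$, $w_{1,L}$ and $w_{2,L}$ converge respectively to $w_{1,L}^0$ and $w_{2,L}^0$, whose non-zero components sit in different entries of the column vector, so for $h$ small the pair $(w_{1,L},w_{2,L})$ is linearly independent; the same argument works for $(w_{1,R},w_{2,R})$. Combined with the preceding proposition (which gives $w_{j,L}\in L^2(I_L)^2$ and $w_{j,R}\in L^2(I_R^\theta)^2$) and with the general-theory fact recalled in the text, that the space of $L^2$-solutions on each half-line is \emph{at most} two-dimensional, these two pairs form bases of the respective $L^2$-solution spaces.

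For the forward implication, assume $E\in{\rm Res}\,(P)$. By the complex-distortion definition, there is a non-trivial global solution $u$ of $(P-E)u=0$ that is $L^2$ on both $I_L$ and $I_R^\theta$. By the basis property we may write $u=\alpha w_{1,L}+\beta w_{2,L}$ on $I_L$ and $u=\gamma w_{1,R}+\delta w_{2,R}$ on $I_R^\theta$. Since the two expressions represent the same global solution (by uniqueness of continuation across $x=0$), we obtain the non-trivial relation $\alpha w_{1,L}+\beta w_{2,L}=\gamma w_{1,R}+\delta w_{2,R}$, i.e.\ the four solutions are linearly dependent.

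Conversely, given such a relation with not all coefficients zero, the pair $(\alpha,\beta)$ must be non-zero, since otherwise $(\gamma,\delta)\neq(0,0)$ would contradict the independence of $(w_{1,R},w_{2,R})$. Then $u:=\alpha w_{1,L}+\beta w_{2,L}$ is a non-zero global solution which coincides with $\gamma w_{1,R}+\delta w_{2,R}$, hence is $L^2$ on both $I_L$ and $I_R^\theta$, witnessing $E$ as a resonance. The argument is essentially bookkeeping, and the only mildly delicate input is the upper bound two on the dimension of the $L^2$-kernels at $\pm\infty$; this is the standard limit-point-type statement for the coupled system and is precisely what makes the matching at $x=0$ a genuine quantization condition rather than an overdetermined one.
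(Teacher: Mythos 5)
Your argument coincides with what the paper leaves implicit: the paper states the proposition as an immediate consequence of the two preceding facts (that $w_{j,L}$, $w_{j,R}$ are $L^2$ on the respective half-lines, and that the space of $L^2$ solutions on each half-line is at most two-dimensional), and you have simply spelled out the bookkeeping — the four-dimensional global solution space, the basis property of each pair, and the two directions of the equivalence. This is correct and is the same route as the paper's.
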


\section{Estimates at the crossing point}
\label{sect5}
In this section, we investigate the asymptotic behaviours of $w_{j,L}(x)$, $w_{j,R}(x)$, and their first derivative at $x=0$.

 We first prove,
\begin{proposition}\sl 
\label{u1Losc}
Let $x_0\in (x^*,0)$. Then, for $x\in [x_0,0]$, one has,
\be
\label{LfunctR}
u_{1,L}^\pm= a_\pm u_{1,R}^- + b_\pm u_{1,R}^+,
\ee
with,
$$
\begin{aligned}
& a_-=\sin\frac{{\mathcal A}(E)}h+\ord(h)\qquad ;\quad  b_- =2\cos\frac{{\mathcal A}(E)}h +\ord (h);\\
& a_+=\frac12 \cos\frac{{\mathcal A}(E)}h+\ord(h)\quad ;\quad  b_+=-\sin\frac{{\mathcal A}(E)}h +\ord (h),
 \end{aligned}
 $$
uniformly as $h\to 0_+$.
\end{proposition}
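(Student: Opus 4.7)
The underlying observation is that $u_{1,L}^\pm$ and $u_{1,R}^\pm$ are all solutions to the single scalar equation $(P_1-E)u=0$ on the classically allowed interval for $P_1$, which—since $|E|=\ord(h^{2/3})$ places the turning points $x_1^*(E)$ and $x_1(E)$ at distance $\ord(h^{2/3})$ from $x^*$ and $0$ respectively—contains $[x_0,0]$ for all $h$ small enough. Since $\W[u_{1,R}^-,u_{1,R}^+]\neq 0$ by \eqref{wronsR}, the pair $\{u_{1,R}^-,u_{1,R}^+\}$ is a basis of $\ker(P_1-E)$ and \eqref{LfunctR} holds automatically, with Cramer expressions
$$
a_\pm=\frac{\W[u_{1,L}^\pm,u_{1,R}^+]}{\W[u_{1,R}^-,u_{1,R}^+]},\qquad b_\pm=-\frac{\W[u_{1,L}^\pm,u_{1,R}^-]}{\W[u_{1,R}^-,u_{1,R}^+]}.
$$
Since these Wronskians are independent of $x$, my plan is to evaluate them at a convenient fixed point $x_1\in(x^*,0)$, uniformly separated from both turning points, where the standard semiclassical WKB asymptotics hold with uniform $\ord(h)$ error.

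At such an $x_1$, all four functions admit explicit leading WKB forms. For $u_{1,L}^-$ this is recorded in \eqref{asympujL}, namely $\frac{2h^{1/6}}{\sqrt\pi}(E-V_1)^{-1/4}\sin(\Phi(x)+\pi/4)$ with $\Phi(x)=(\mathcal{A}(E)+\nu_1(x))/h$. For $u_{1,L}^+$, the Yafaev-type construction at the \emph{left} turning point (matched to the growing branch at $-\infty$) furnishes the complementary form $\frac{h^{1/6}}{\sqrt\pi}(E-V_1)^{-1/4}\cos(\Phi(x)+\pi/4)$, with amplitude fixed by \eqref{wronsL}. For $u_{1,R}^\pm$, the exponential asymptotics recorded in the second remark of Section \ref{sect4} propagate across the right turning point $x_1(E)$ via the Airy connection formulas of Appendix \ref{appendix2}, yielding a cosine-type expansion $\sim\sqrt\pi\, h^{1/6}(E-V_1)^{-1/4}\cos(\nu_1(x)/h+\pi/4)$ for $u_{1,R}^-$ and the corresponding sine-type expansion for $u_{1,R}^+$, amplitudes pinned by \eqref{wronsR}.

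Inserting these expansions into the Wronskian ratios and applying
$$
\sin(\alpha+\beta)=\sin\alpha\cos\beta+\cos\alpha\sin\beta,\qquad \cos(\alpha+\beta)=\cos\alpha\cos\beta-\sin\alpha\sin\beta
$$
with $\alpha=\mathcal{A}(E)/h$ and $\beta=\nu_1(x_1)/h+\pi/4$, the leading Wronskian of two WKB functions sharing the envelope $(E-V_1)^{-1/4}$ is a pure constant divided by $h$, so all $x_1$-dependent quantities cancel (as they must, the ratios being $x$-independent), leaving only $\sin(\mathcal{A}(E)/h)$ and $\cos(\mathcal{A}(E)/h)$, multiplied by constants built from the four amplitude normalizations. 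As an internal consistency check, the stated leading values yield $a_-b_+-a_+b_-=-1$, which matches $\W[u_{1,L}^-,u_{1,L}^+]/\W[u_{1,R}^-,u_{1,R}^+]=-1+\ord(h)$ by \eqref{wronsL} and \eqref{wronsR}.

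The principal technical point is the careful bookkeeping of the numerical amplitude constants through the Airy-type connection formulas so as to recover exactly the factors $1,2,\tfrac12,-1$ in the statement: this is routine provided the precise Yafaev normalizations of Appendix \ref{appendix2} are invoked, but it requires attention to avoid sign or factor-of-two errors. The uniform $\ord(h)$ remainder is immediate since $x_1$ has been chosen fixed and away from both turning points.
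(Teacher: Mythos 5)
Your proposal is correct and follows essentially the same route as the paper: express $a_\pm,b_\pm$ as Wronskian ratios via Cramer's rule, then evaluate the Wronskians at a fixed interior point of $(x^*,0)$ using WKB forms. The only cosmetic difference is that the paper keeps $u_{1,R}^\pm$ in the Airy representation of Proposition \ref{yaf1} (i.e.\ $(\xi_1')^{-1/2}\ai(h^{-2/3}\xi_1)$, $(\xi_1')^{-1/2}\bi(h^{-2/3}\xi_1)$ together with \eqref{uRprime}) and expands $\ai,\bi,\ai',\bi'$ at $-\infty$ inside the Wronskian, whereas you write out the resulting sine/cosine WKB forms directly before substituting; these are the same computation, and the paper likewise evaluates only $\W(u_{1,L}^-,u_{1,R}^+)$ explicitly and invokes the same argument for the remaining three. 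Your determinant consistency check $a_-b_+-a_+b_-=-1+\ord(h)$ against \eqref{wronsL}/\eqref{wronsR} is a nice sanity check that the paper does not state.
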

\begin{proof} Since $(u_{1,R}^-,u_{1,R}^+)$ is a basis of solutions to $(P_1-E)u=0$, we know that (\ref{LfunctR}) is verified with,
$$
a_-=\frac{\W (u_{1,L}^-,u_{1,R}^+)}{\W(u_{1,R}^-,u_{1,R}^+)}\quad ;\quad b_-=\frac{\W (u_{1,R}^-, u_{1,L}^-)}{\W(u_{1,R}^-,u_{1,R}^+)}.
$$
We compute these Wronskians at some arbitrary point in $x\in (x^*,0)$, by using formula (\ref{asympujL}), Proposition \ref{yaf1} and (\ref{wronsR}). By (\ref{asympujL}), we have,
$$
\begin{aligned}
& u_{1,L}^-(x) =\frac{2h^{\frac16}}{\sqrt{\pi}(E-V_1(x))^{\frac14}}\cos \left( \frac{{\mathcal A}(E)+\nu_1(x)}{h} -\frac{\pi}4\right) + \ord (h^{\frac76});\\
& (u_{1,L}^-)'(x) =-\frac{2h^{\frac16}(E-V_1(x))^{\frac14}}{h\sqrt{\pi}}\sin \left( \frac{{\mathcal A}(E)+\nu_1(x)}{h} -\frac{\pi}4\right) + \ord (h^{\frac16}).
\end{aligned}
$$
Therefore, using Proposition \ref{yaf1} and (\ref{uRprime}), we obtain,
$$
\begin{aligned}
\W(u_{1,L}^-, u_{1,R}^+) = & \frac{2h^{-\frac12}(\xi_1')^{\frac12}}{\sqrt{\pi}(E-V_1)^{\frac14}}\cos \left( \frac{{\mathcal A}(E)+\nu_1(x)}{h} -\frac{\pi}4\right)\bi'(h^{-\frac23}\xi_1) \\
& +\frac{2h^{-\frac56}(E-V_1)^{\frac14}}{\sqrt{\pi}(\xi_1')^{\frac12}}\sin \left( \frac{{\mathcal A}(E)+\nu_1(x)}{h} -\frac{\pi}4\right)\bi(h^{-\frac23}\xi_1)\\
& +\ord(h^{\frac13}).
\end{aligned}
$$
Then, using the asymptotic behaviour of $\bi$ and $\bi'$ at $-\infty$, and observing that one has the identity $\frac23 (-\xi_1)^{\frac32} = -\nu_1$ (so that $(\xi_1')^{\frac12} =(E-V_1)^{1/4}(-\xi_1)^{-1/4}$), this gives,
$$
\begin{aligned}
W(u_{1,L}^-, u_{1,R}^+) = &\frac{2h^{-\frac23}}{\pi}\cos \left( \frac{{\mathcal A}(E)+\nu_1(x)}{h} -\frac{\pi}4\right) \cos\left(\frac{\nu_1}h +\frac{\pi}4\right) \\
&+ \frac{2h^{-\frac23}}{\pi}\sin \left( \frac{{\mathcal A}(E)+\nu_1(x)}{h} -\frac{\pi}4\right) \sin\left(\frac{\nu_1}h +\frac{\pi}4\right)  +\ord(h^{\frac13})\\
& = \frac{2h^{-\frac23}}{\pi}\sin\frac{{\mathcal A}(E)}h +\ord(h^{\frac13}),
\end{aligned}
$$
and thus, by (\ref{wronsR}),
$$
a_-=\sin\frac{{\mathcal A}(E)}h+\ord(h).
$$
The same arguments hold for $b_-, a_+,b_+$, and the result follows.
\end{proof}

From now on, we set,
\be
\widetilde\partial:=\eps^2\partial_x,
\ee
and we will use this operator in all the Wronskians that will appear (instead of the usual derivative), denoting them by $\widetilde{\mathcal W}$ instead of ${\mathcal W}$.

\begin{proposition}\sl
\label{calculen0}
For $j=1,2$ and $S=L,R$, there exist complex numbers $\alpha_{j,S}, \beta_{j,S}$, such that,
\be
\label{estw1S}
\begin{aligned}
& w_{1,S} (0) =  \left[\begin{array}{c}
u_{1,S}^-(0) + \beta_{1,S} u_{1,S}^+(0)\\
\alpha_{1,S} u_{2,S}^+(0)
\end{array}\right] +\ord(h);\\
& \widetilde\partial w_{1,S}'(0) =\left[\begin{array}{c}
\widetilde\partial u_{1,S}^-(0) + \beta_{1,S} \widetilde\partial u_{1,S}^+(0)\\
\alpha_{1,S} \widetilde\partial u_{2,S}^+(0)
\end{array}\right]  +\ord(h),
\end{aligned}
\ee
\be
\label{estw2S}
\begin{aligned}
& w_{2,S} (0) =  \left[\begin{array}{c}
\alpha_{2,S} u_{1,S}^+(0)\\
u_{2,S}^-(0) + \beta_{2,S} u_{2,S}^+(0)
\end{array}\right] +\ord(h);\\
& \widetilde\partial w_{2,S}'(0) =\left[\begin{array}{c}
\alpha_{2,S} \widetilde\partial u_{1,S}^+(0)\\
\widetilde\partial u_{2,S}^-(0) + \beta_{2,S} \widetilde\partial u_{2,S}^+(0)
\end{array}\right]  +\ord(h),
\end{aligned}
\ee
uniformly as $h\to 0_+$. 
\end{proposition}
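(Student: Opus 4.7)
The plan is to exploit a structural observation about the fundamental solutions $K_{j,L}$ and $K_{j,R}$ at the crossing point $x=0$. Because one of the two integrals in (\ref{eq1}) degenerates at $x=0$, one has
$$K_{j,L}[v](0) \;=\; \gamma_{j,L}[v]\, u_{j,L}^+(0),\qquad \gamma_{j,L}[v]:=\frac{1}{h^2\,\W[u_{j,L}^+,u_{j,L}^-]}\int_{-\infty}^{0} u_{j,L}^-(t)\,v(t)\,dt.$$
A direct Leibniz computation, in which the boundary contribution from the upper limit of $\int_{-\infty}^x$ cancels that from the lower limit of $\int_x^0$ at $x=0$, gives $K_{j,L}[v]'(0)=\gamma_{j,L}[v]\,(u_{j,L}^+)'(0)$; multiplying by $\eps^2$ yields $\widetilde\partial K_{j,L}[v](0) = \gamma_{j,L}[v]\,\widetilde\partial u_{j,L}^+(0)$ with the \emph{same} scalar $\gamma_{j,L}[v]$. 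An analogous identity holds on $I_R^\theta$, with $u_{j,R}^+$ and an integral $\int_0^{+\infty} u_{j,R}^-(t)v(t)\,dt$ playing the corresponding roles in (\ref{eq1R}).

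I would then apply this observation to the Neumann series (\ref{w1L})--(\ref{w2R}). For $w_{1,L}$, writing $\Phi_1:=\sum_{k\geq 0}M_L^k u_{1,L}^-$, the fixed-point relation $\Phi_1 = u_{1,L}^- + h^2 K_{1,L}(W K_{2,L} W^* \Phi_1)$ evaluated at $x=0$ produces
$$\Phi_1(0) = u_{1,L}^-(0) + \beta_{1,L}\, u_{1,L}^+(0),\qquad \beta_{1,L}:=h^2\,\gamma_{1,L}[W K_{2,L} W^* \Phi_1],$$
and, by the structural observation, the same $\beta_{1,L}$ governs $\widetilde\partial \Phi_1(0)$. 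The second component $-h K_{2,L}W^*\Phi_1$ evaluated at $0$ equals $\alpha_{1,L}\,u_{2,L}^+(0)$ with $\alpha_{1,L}:=-h\,\gamma_{2,L}[W^*\Phi_1]$, and again the same $\alpha_{1,L}$ controls its $\widetilde\partial$-derivative. The three remaining cases $w_{2,L}$, $w_{1,R}$, $w_{2,R}$ are treated by the same procedure, reading off $\alpha_{j,S}, \beta_{j,S}$ from the corresponding Neumann expression; convergence and boundedness of these coefficients follow from the operator-norm bounds $\|M_L\|,\|M_R\|=\ord(h^{2/3})$ and $\|hK_{2,L}W^*\|,\|hK_{1,R}W\|=\ord(h^{1/3})$ given by Propositions \ref{claim} and \ref{claim2}.

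With the definitions above the stated formulas actually hold exactly; the $\ord(h)$ remainder in the Proposition is a margin that permits, in Section \ref{sect6}, the replacement of $\alpha_{j,S},\beta_{j,S}$ by the leading nontrivial term in their Neumann expansions. By the same operator-norm bounds any such truncation perturbs $\alpha_{j,S}$ and $\beta_{j,S}$ by $\ord(h)$, and since $u_{j,S}^\pm(0)$ and $\widetilde\partial u_{j,S}^\pm(0)$ are $\ord(1)$ uniformly in $h$, this contributes only $\ord(h)$ to each slot of the right-hand sides.

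The main technical point is the Leibniz cancellation that forces the identical scalar $\gamma_{j,S}[v]$ in both the value and the $\widetilde\partial$-formulas at $x=0$. In particular, when $W$ or $W^*$ (each containing $hD_x$) act inside the compositions and one integrates by parts to make sense of $K_{j,S}W^*$ on $C_b^0$, the resulting boundary contribution at $t=0$ still enters through the prefactor $u_{j,L}^-(0)$ (resp.\ $u_{j,R}^+(0)$) of the Green's kernel at $x=0$, so that the clean proportionality to $u_{j,S}^+(0)$ survives. Once this cancellation is verified, the rest of the argument is routine bookkeeping on the Neumann series.
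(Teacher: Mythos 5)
Your proof is correct and rests on exactly the same key mechanism as the paper's: the Green's function $K_{j,S}[v]$ evaluated at the endpoint $x=0$ collapses to a scalar multiple of $u_{j,S}^+(0)$, and the Leibniz boundary terms cancel so that the same scalar multiplies $\widetilde\partial u_{j,S}^+(0)$ in the derivative slot. The paper phrases this by splitting $r_{1,L}(x)=-\alpha_{1,L}u_{2,L}^+(x)+\widetilde r_{1,L}(x)$ with $\widetilde r_{1,L}'(0)=0$, which is your cancellation in disguise.

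The only organizational difference is that you extract the coefficients from the full fixed-point $\Phi_1=\sum_{k\ge 0}M_L^k u_{1,L}^-$, making (\ref{estw1S})--(\ref{estw2S}) hold \emph{exactly}, and then push the $\ord(h)$ remainder into the truncation to the first Neumann term used in the explicit formulas (\ref{defalpha1L})--(\ref{defconstjS}). The paper instead defines $\alpha_{j,S},\beta_{j,S}$ from the truncated quantity $r_{1,L}=hK_{2,L}W^*u_{1,L}^-$ and separately establishes the pointwise bounds $\widetilde\partial\bigl(hK_{2,L}W^*v\bigr)(0)=\ord(\e)\sup|v|$ and $\widetilde\partial\bigl(M_L v\bigr)(0)=\ord(\e^2)\sup|v|$ to control the $\widetilde\partial$ of the Neumann tail. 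Your route avoids those extra derivative estimates for the proposition itself (they resurface only in showing the truncation perturbs the coefficients by $\ord(h)$, for which the $C_b^0$ bounds $\|hK_{2,L}W^*\|=\ord(h^{1/3})$ and $\|M_L\|=\ord(h^{2/3})$ are enough). One small point to be explicit about, which both you and the paper treat rather quickly: the integration by parts that defines $K_{j,S}W^*$ on $C_b^0$ produces a boundary term at $t=0$ of size $\ord(h)$ proportional to $u_{j,S}^-(0)$ (resp.\ $u_{j,S}^+(0)$ on $I_R^\theta$); as you say, it enters $\gamma_{j,S}[\cdot]$ and hence still produces the clean proportionality to $u_{j,S}^+(0)$ in both slots. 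With that caveat noted, the argument is sound.
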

\begin{proof} We prove (\ref{estw1S}) and (\ref{estw2S}) in the case $S=L$ only (the case $S=R$ being similar). We start with $j=1$. By 
(\ref{w1L}), we have,
\be
w_{1,L}:=\left(\begin{array}{c} u_{1,L}^-+hK_{1,L}Wr_{1,L}\\-r_{1,L}\end{array}\right)+\ord (h),
\ee
with,
$$
r_{1,L}:=hK_{2,L} W^*u_{1,L}^-.
$$
In particular, using the expression (\ref{eq1}) of $K_{2,L}$, we find
$$
r_{1,L}(0)=-\alpha_{1,L}u_{2,L}^+(0),
$$
with,
\be
\label{defalpha1L}
\alpha_{1,L}:=\frac{-1}{h\W(u_{2,L}^+, u_{2,L}^-)}\int_{-\infty}^0u_{2,L}^-(t)(W^*u_{1,L}^-)(t)dt.
\ee
In the same way,
$$
hK_{1,L}Wr_{1,L} (0) =\beta_{1,L}u_{1,L}^+(0),
$$
with,
\be
\label{defbeta1L}
\beta_{1,L}:=\frac1{h\W(u_{1,L}^+, u_{1,L}^-)}\int_{-\infty}^0u_{1,L}^-(t)(Wr_{1,L})(t)dt.
\ee
In addition, we can write,
$$
r_{1,L}(x)= -\alpha_{1,L}u_{2,L}^+(x)+\widetilde r_{1,L}(x)
$$
with,
$$
\begin{aligned}
\widetilde r_{1,L}(x):=\frac1{h\W(u_{2,L}^+, u_{2,L}^-)}& \int_0^x\left( u_{2,L}^+(x) u_{2,L}^-(t) -u_{2,L}^-(x) u_{2,L}^+(t)\right)\\
& \hskip 4cm \times (W^*u_{1,L}^-)(t)dt.
\end{aligned}
$$
In particular $\widetilde r_{1,L}'(0)=0$, and thus $r_{1,L}'(0)= -\alpha_{1,L}(u_{2,L}^+)'(0)$. We see in the same way that $(hK_{1,L}Wr_{1,L})' (0) =\beta_{1,L}(u_{1,L}^+)'(0)$.

Moreover, concerning the remainder term in the derivative, we observe that for any $j,S$, one has $\widetilde\partial u_{j,S}^\pm (0) =\ord (1)$ (see, e.g., Remark \ref{derivees}). Therefore, for any $v\in C^0_b(I_L)$, we see on the definitions of $M_L$ and $K_{2,L}$ that we have,
$$
\begin{aligned}
& \widetilde\partial (hK_{2,L}W^*v)(0)= \ord(\eps^{-1})\left(\int_{-\infty}^0 |u_{2,L}^-(t)|.|v(t)| dt + h|(u_{2,L}^-)'(0)|.|v(0)|\right);\\
& \widetilde\partial M_Lv(0)=\ord (\eps^{-2})\left(\iint_{-\infty}^0|U_2(t,s)| .|v(s)|dtds + h\int_{-\infty}^0|U_2(t,0)|.|v(0)|dt\right),
\end{aligned}
$$
where $U_2$ is defined in (\ref{defUj}). Then, we observe that $\int_{-\infty}^0 |u_{2,L}^-(t)|dt =\ord(\eps^2)$ and, with the same proof as for (\ref{estdbleintU2}), we have $\iint_{-\infty}^0|U_2(t,s)|dtds=\ord (\eps^4)$. Using also (\ref{intU2}), we obtain,
$$
\begin{aligned}
& \widetilde\partial (hK_{2,L}W^*v)(0)= \ord(\eps)\sup_{I_L} |v|;\\
& \widetilde\partial M_Lv(0)=\ord (\eps^{2})\sup_{I_L} |v|.
\end{aligned}
$$
As a consequence,
$$
\begin{aligned}
& \widetilde\partial (hK_{2,L}W^*\sum_{j\geq 1}M_L^ju_{1,L}^-)(0)= \ord(\eps^3);\\
& \widetilde\partial (\sum_{j\geq 2}M_L^ju_{1,L}^-)(0)=\ord (\eps^{4}),
\end{aligned}
$$
and (\ref{estw1S}) follows. The proof of (\ref{estw2S}) is almost the same, with the only difference that the starting function is $hK_{1,L}Wu_{2,L}^-$ instead of $u_{1,L}^-$. But thanks to the decay properties of $u_{2,L}^-$, we see that its behaviour is similar to that of $u_{1,L}^-$, and (\ref{estw2S}) follows.

In addition to (\ref{defalpha1L})-(\ref{defbeta1L}), the other constants appearing in (\ref{estw1S})-(\ref{estw2S}) are,
\be
\label{defconstjS}
\begin{aligned}
& \alpha_{2,L}=  \frac{-\int_{-\infty}^0u_{1,L}^-(t)(Wu_{2,L}^-)(t)dt}{h\W(u_{1,L}^+, u_{1,L}^-)}\quad ; \quad 
\beta_{2,L}= \frac{\int_{-\infty}^0u_{2,L}^-(t)(W^*r_{2,L})(t)dt}{h\W(u_{2,L}^+, u_{2,L}^-)};\\
& \alpha_{1,R}=  \frac{-\int^{+\infty}_0u_{2,R}^-(t)(W^*u_{1,R}^-)(t)dt}{h\W(u_{2,R}^-, u_{2,R}^+)}\,\, ; \quad 
\beta_{1,R}=\frac{\int^{+\infty}_0u_{1,R}^-(t)(Wr_{1,R})(t)dt}{h\W(u_{1,R}^-, u_{1,R}^+)};\\
& \alpha_{2,R}=  \frac{-\int^{+\infty}_0u_{1,R}^-(t)(Wu_{2,R}^-)(t)dt}{h\W(u_{1,R}^-, u_{1,R}^+)} \quad ; \quad 
\beta_{2,R}=\frac{\int^{+\infty}_0u_{2,R}^-(t)(W^*r_{2,R})(t)dt}{h\W(u_{2,R}^-, u_{2,R}^+)},
\end{aligned}
\ee
where we have set,
$$
r_{2,L}:=hK_{1,L}Wu_{2,L}^- \quad ; \quad r_{1,R}:= hK_{2,R}W^*u_{1,R}^- \quad ; \quad  r_{2,R}:=hK_{1,R}Wu_{2,R}^-,
$$
and where, in the case of $\alpha_{j,R}$ and $\beta_{j,R}$, the integrals run over $I_R^\theta$.
\end{proof}

Now, setting,
\be
\begin{aligned}
& \mu_A(t):=\int_0^{+\infty} \ai (y-t)\check\ai (y+t)dy;\\
& \mu_B(t):=\int_0^{+\infty} \ai (y-t)\check\bi (y+t)dy,
\end{aligned}
\ee
we also have,
\begin{proposition}\sl
\label{calculen0suite}
As $h\to 0_+$, one has,
\be
\label{approxalphabeta}
\begin{aligned}
& \alpha_{j,L} =-2\eps\pi r_0(0)\left(\mu_A(\re\,\rho)\sin\frac{{\mathcal A(E)}}{h}+ \mu_B(\re\,\rho)\cos\frac{{\mathcal A(E)}}{h}\right)+\ord(\eps^2),\\
& \alpha_{j,R} =-\frac{\eps\pi r_0(0)e^{i\frac{\pi}4}}{\sqrt 2}\left( \mu_A(\re\,\rho)-i\mu_B(\re\,\rho)\right) +\ord (\eps^2),\quad (j=1,2);\\
& \im\, \beta_{1,L}=\ord (h);\\
&\im\, \beta_{1,R}=\pi^2r_0(0)^2\eps^2\left( \mu_A(\re\,\rho)^2+ \mu_B(\re\,\rho)^2\right) +\ord(h)\\
&\re\, \beta_{1,R} =2\eps^2\pi^2r_0(0)^2\iint_{0\leq s\leq t}\ai(t-\re\,\rho) \ai(s-\re\,\rho) \\
& \hskip 0.5cm\times \left(\check\ai(t+\re\,\rho)\check\bi(s+\re\,\rho)-\check\ai(s+\re\,\rho)\check\bi(t+\re\,\rho)\right)dsdt +\ord (h);\\
& \beta_{2,S} =\ord (\eps^2),\quad (S=L,R).
\end{aligned}
\ee
\end{proposition}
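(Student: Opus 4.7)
The starting point is the explicit integral representation of the six coefficients provided by (\ref{defalpha1L})--(\ref{defbeta1L}) and (\ref{defconstjS}), together with the Wronskian identities (\ref{wronsL})--(\ref{wronsR}), which give a prefactor $(h\,\W[u_{j,S}^\pm])^{-1}=\ord(\eps^{-1})$. The plan is to show that each of these integrals concentrates in a window of size $h^{2/3}$ around the crossing point $x=0$: outside this window, at least one of the two factors in the integrand ($u_{2,L}^-$ on $I_L$, $u_{1,R}^-$ on $I_R^\theta$) is exponentially decaying, and its contribution is absorbed into the stated remainders by exactly the bounds used for Propositions \ref{claim}--\ref{claim2}. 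Inside the window, I would rescale $t=h^{2/3}y$; by the constructions of Appendix \ref{appendix2}, each $u_{j,S}^\pm(h^{2/3}y)$ becomes $h^{1/6}$ times an explicit Airy function of $\tau_j^{-2/3}(\pm\tau_j y-\rho)$ (with $\rho:=E/h^{2/3}$, $\im\,\rho=\ord(\eps)$), and the operator $W=r_0(x)+r_1(x)hD_x$ reduces to multiplication by $r_0(0)$, since $r_1(x)hD_x$ becomes $r_1(0)\eps\partial_y+\ord(\eps^2)$ and $r_0(x)-r_0(0)=\ord(\eps^2)$.

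For $\alpha_{j,L}$ the key additional input is the connection formula (\ref{LfunctR}) from Proposition \ref{u1Losc}, which writes $u_{1,L}^-$ in the window as $\sin(\mathcal{A}(E)/h)\,u_{1,R}^-+2\cos(\mathcal{A}(E)/h)\,u_{1,R}^++\ord(h)$. Substituting this, rescaling, and extracting $r_0(0)$ leaves Airy integrals of the form $\int_0^{+\infty}\ai(y-\rho)\check\ai(y+\rho)\,dy$ and $\int_0^{+\infty}\ai(y-\rho)\check\bi(y+\rho)\,dy$, which by analyticity in $\rho$ are equal respectively to $\mu_A(\re\,\rho)+\ord(\eps)$ and $\mu_B(\re\,\rho)+\ord(\eps)$. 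Combining with the $\ord(\eps^{-1})$ Wronskian prefactor and the $h^{1/6}\cdot h^{1/6}\cdot h^{2/3}=h$ scaling gain produces the stated formula for $\alpha_{j,L}$. The proof of the $\alpha_{j,R}$ formula is parallel; the only new ingredient is the identity (\ref{defu2Rpm}), which expresses $u_{2,R}^-$ as $\frac{e^{i\pi/4}}{\sqrt 2}(\frac12 a_2^-u_{2,L}^--ia_2^+u_{2,L}^+)$, so the $\mu_A$ and $\mu_B$ contributions now combine into $\mu_A(\re\,\rho)-i\mu_B(\re\,\rho)$ with overall factor $e^{i\pi/4}/\sqrt{2}$.

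The $\beta_{j,S}$ coefficients involve an extra application of $K_{j,S}$ through $r_{j,S}$; by the single-integral analysis above, $r_{j,S}$ itself has order $\eps$ in the window and admits an explicit Airy representation, so each defining integral becomes a double integral over a region of area $\ord(h^{4/3})$, giving $\beta_{j,S}=\ord(\eps^2)$. This already proves $\beta_{2,S}=\ord(\eps^2)$. For $\beta_{1,L}$, all four solutions entering the double integral are real up to errors of order $\im\,E=\ord(h)$, so the result is real to that order and only $\im\,\beta_{1,L}=\ord(h)$ survives. For $\beta_{1,R}$, (\ref{defu2Rpm}) and its $u_{1,R}$-analog give $u_{j,R}^\pm$ a genuine complex structure; symmetrizing the resulting double Airy integral across the diagonal $s=t$, the antisymmetric part produces exactly the iterated integral written in (\ref{approxalphabeta}) for $\re\,\beta_{1,R}$, while the symmetric (diagonal) part collapses, via Fubini on the extension of the triangle $\{s\leq t\}$ to the full square, into $\mu_A(\re\,\rho)^2+\mu_B(\re\,\rho)^2$ with prefactor $\pi^2 r_0(0)^2\eps^2$.

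The main obstacle is precisely the constant bookkeeping in the $\beta_{1,R}$ decomposition: one has to track the coefficients $2^{\pm 1/2}e^{i\pi/4}$ and $\frac12 a_2^\mp,\ ia_2^\pm$ from (\ref{defu2Rpm}), the Wronskian normalization from (\ref{wronsR}), and the cross-terms $\ai\check\bi-\check\ai\bi$ produced by antisymmetrization across $s=t$, and verify that the symmetric diagonal contribution yields precisely the combination $\mu_A^2+\mu_B^2$ (rather than some other similar-looking Airy double integral) while the antisymmetric part matches the explicit triangular kernel in (\ref{approxalphabeta}). The remaining ingredients---the outer-region decay bounds, the collapse of $W$ onto $r_0(0)$ at leading order, and the Airy leading-order calculations---are all routine extensions of the single-integral analysis developed for $\alpha_{j,S}$.
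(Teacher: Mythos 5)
Your overall strategy matches the paper's: localize each defining integral to the Airy window of size $h^{2/3}$ around $x=0$, rescale $t=\eps^2 y$, pass to the limit Airy representations, and identify the leading coefficients as $\mu_A(\re\rho)$, $\mu_B(\re\rho)$ (using Proposition~\ref{u1Losc} on $I_L$ and (\ref{defu2Rpm}) on $I_R^\theta$), with the $\beta_{1,R}$ case handled by exchanging $s\leftrightarrow t$ in the triangular double integral. This is essentially the paper's proof.

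However, there is a concrete scaling error that would vitiate the bookkeeping if carried through. You assert that in the window each factor $u_{j,S}^\pm(\eps^2 y)$ is $h^{1/6}$ times an Airy function, and combine "$h^{1/6}\cdot h^{1/6}\cdot h^{2/3}=h$" with the $\ord(\eps^{-1})$ Wronskian prefactor. But by Propositions~\ref{yaf1}--\ref{yaf2}, near the turning point one has $u_{1,R}^\pm(x)\sim c\,(\xi_1')^{-1/2}\ai\bigl(h^{-2/3}\xi_1(x)\bigr)$ with \emph{no} $h^{1/6}$ prefactor; the $h^{1/6}$ appears only in the oscillatory/exponential WKB regime (e.g.\ in (\ref{asympujL}) or at $x\to\pm\infty$), because $\ai(h^{-2/3}\xi)\sim(h^{-2/3}\xi)^{-1/4}\times\cdots$ when $|\xi|\gg h^{2/3}$. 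Under rescaling the solutions are $\ord(1)$, the integral over the window is $\ord(\eps^2)$ (not $\ord(h)$), and $\ord(\eps^{-1})\cdot\ord(\eps^2)=\ord(\eps)$ — which is exactly the leading order $-2\eps\pi r_0(0)(\cdots)$ in the statement. Your version yields $\ord(\eps^2)$, which would push the main term into the error and make the asserted formulas unprovable by this route. The rest of the sketch (integration by parts to reduce $W,W^*$ to $r_0(0)$; use of $\im\rho=\ord(\eps)$ to replace $\rho$ by $\re\rho$; symmetrization of the double integral for $\beta_{1,R}$ into $\ai\check\bi-\check\ai\bi$ antisymmetric cross-terms and $\mu_A^2+\mu_B^2$ symmetric part) is consistent with what the paper does, though the paper also takes care to quantify the outer-region truncation with an intermediate parameter $\lambda\sim|\log\eps|^{2/3}$ so that the error from cutting the Airy integrals at $-\lambda$ is $\ord(\eps^2)$ — a step you gloss over as "absorbed by the bounds."
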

\begin{proof}

Let us  first study $\alpha_{1,L}$ given in (\ref{defalpha1L}). Using (\ref{wronsL}) and the exponential decay of $u_{2,L}^-$ away from 0, we obtain,
$$
\alpha_{1,L}=\frac{-\pi}{2\eps}\int_{-\delta}^0u_{2,L}^-(t)(r_0(t)u_{1,L}^-(t)+hD_t(\overline{r_1}u_{1,L}^-)(t))dt +\ord(\eps^2),
$$
where $\delta >0$ is arbitrarily small. Further, an integration by parts gives,
$$
\alpha_{1,L}=\frac{-\pi}{2\eps}\int_{-\delta}^0u_{2,L}^-(t)r_0(t)u_{1,L}^-(t)+ih(u_{2,L}^-)'(t)\overline{r_1}(t)u_{1,L}^-(t))dt +\ord(\eps^2),
$$
and since $h(u_{2,L}^-)'(t)u_{1,L}^-(t)$ is $\ord(h^{1/3}e^{-c|t|^{3/2}/h})$ on $[-\delta, -Ch^{2/3}]$ (with $c>0$, and $C>1$ large enough), and is $\ord(h^{1/3})$ on $[-Ch^{2/3},0]$, we are reduced to,
\be
\label{alpha1Lred}
\alpha_{1,L}=\frac{-\pi}{2\eps}\int_{-\delta}^0u_{2,L}^-(t)r_0(t)u_{1,L}^-(t)dt +\ord(\eps^2).
\ee
We introduce a large extra-parameter $\lambda \gg1$, and, dividing the integral in two parts, we set,
$$
\begin{aligned}
& \alpha_{1,L}^-(\lambda)=\frac{-\pi}{2\eps}\int_{-\delta}^{-\lambda h^{\frac23}}u_{2,L}^-(t)r_0(t)u_{1,L}^-(t)dt;\\
& \alpha_{1,L}^+(\lambda)=\frac{-\pi}{2\eps}\int_{-\lambda h^{\frac23}}^0u_{2,L}^-(t)r_0(t)u_{1,L}^-(t)dt.
\end{aligned}
$$
We have,
$$
\alpha_{1,L}^-(\lambda)=\ord(\eps^{-1})\int_{\lambda h^{\frac23}}^\delta \frac{h^{\frac13}}{\sqrt t}e^{-ct^{3/2}/h}dt=\ord(\eps e^{-c\lambda^{3/2}}),
$$
where $c>0$ is a constant, and the estimate is uniform with respect to $\eps>0$ small enough and $\lambda >1$ large enough such that $\lambda h^{2/3}\to 0$. In particular, taking,
\be
\label{choicelambda}
\lambda \geq (c^{-1}|\ln\eps |)^{2/3},
\ee
we obtain,
$$
\alpha_{1,L}^-(\lambda)=\ord(\eps^2).
$$
On the other hand, using Propositions \ref{u1Losc}, \ref{yaf1} and \ref{yaf2}, we obtain,
$$
\begin{aligned}
\alpha_{1,L}^+(\lambda)= &\frac{-2\pi}{\eps}(\sin\frac{\mathcal A}{h})\int_{-\lambda h^{\frac23}}^0r_0(t)(\xi_1'\xi_2')^{-\frac12}\ai (h^{-\frac23}\xi_1)\check\ai (h^{-\frac23}\xi_2)dt\\
& -\frac{2\pi}{\eps}(\cos\frac{\mathcal A}{h})\int_{-\lambda h^{\frac23}}^0r_0(t)(\xi_1'\xi_2')^{-\frac12}\bi (h^{-\frac23}\xi_1)\check\ai (h^{-\frac23}\xi_2)dt +\ord(\eps^2).
\end{aligned}
$$
Making the change of variable $y:=\eps^{-2}t$, and using that, for $y\in [-\lambda, 0]$, we have  $r_0(\eps^2y)=r_0(0)+\ord (\eps^2\lambda)$, $\xi_j'(\eps^2 y)=1+\ord(\eps^2\lambda)$, $\ai(\eps^{-2}\xi_1(\eps^2y))=\ai (y-\rho)+\ord(\lambda^2\eps^2)$, $\check\ai(\eps^{-2}\xi_2(\eps^2y))=\check\ai (y+\rho)+\ord(\lambda^2\eps^2)$, $\bi(\eps^{-2}\xi_1(\eps^2y))=\bi (y-\rho)+\ord(\lambda^2\eps^2)$, we obtain,
$$
\begin{aligned}
\alpha_{1,L}^+(\lambda)= & -2\eps \pi r_0(0)(\sin\frac{\mathcal A}{h})\int_{-\lambda}^0\ai (y-\rho)\check\ai (y+\rho)dy\\
& -2\eps\pi r_0(0)(\cos\frac{\mathcal A}{h})\int_{-\lambda}^0\bi (y-\rho)\check\ai (y+\rho)dy +\ord(\eps^2)+\ord(\lambda^3\eps^3).
\end{aligned}
$$
Then, using the behaviour of $\ai$, $\check\ai$ and $\bi$ at $-\infty$, this leads to,
$$
\begin{aligned}
& \alpha_{1,L}^+(\lambda)\\
& =- 2\eps \pi r_0(0)\int_{-\infty}^0\check\ai (y+\rho)\left((\sin\frac{\mathcal A}{h})\ai (y-\rho)+(\cos\frac{\mathcal A}{h})\bi (y-\rho)\right)dy\\
&\quad  +\ord(\eps e^{-c'\lambda^{2/3}})+\ord(\eps^2)+\ord(\lambda^3\eps^3),
\end{aligned}
$$
with $c'>0$ constant. Therefore, taking $\lambda:=((c'')^{-1}|\ln\eps |)^{2/3}$ with $c=\min\{c,c'\}$, and using the fact that $\im\,\rho=\ord(\eps)$, we obtain the required approximation of $\alpha_{1,L}$. The approximations of $\alpha_{2,L}$, $\alpha_{1,R}$ and $\alpha_{2,R}$ are obtained in a very similar way (starting from (\ref{defconstjS})), and we omit the proofs.

Now we study $\beta_{1,R}$. We have,
$$
\begin{aligned}
\beta_{1,R} =  &\frac{\pi^2}{4\eps^2}\int_0^{+\infty} u_{1,R}^-(t)W(t,hD_t)\left[ u_{2,R}^-(t)\int_0^t u_{2,R}^+(s)(W^*u_{1,R}^-)(s)ds\right.\\
& \left.+u_{2,R}^+(t)(W^*u_{1,R}^-)(s)ds\int_t^{+\infty}u_{2,R}^-(s)(W^*u_{1,R}^-)(s)ds\right] dt +\ord(h),
\end{aligned}
$$
where the integrals run over $I_R^\theta$. Because of the exponential decay of $u_{1,R}^-$ away from 0, we immediately see that only a neighbourhood of 0 contributes to the integrals (up to $\ord(e^{-c/h})$ with $c>0$). Moreover, making an integration by parts and using the behaviour of all the functions near 0, we see as for $\alpha_{1,L}$ that, up to an error $\ord(\eps^3)$, $W$ and $W^*$ can be replaced by $r_0$. Finally, using (\ref{defu2Rpm}) and the expressions of $u_{1,R}^-$, $u_{2,L}^-$ and $u_{2,L}^+$ given in Propositions \ref{yaf1} and \ref{yaf2}, and proceeding as for $\alpha_{1,L}$, we obtain,
$$
\begin{aligned}
\beta_{1,R} &= 2i\pi^2r_0(0)^2\eps^2\iint_{0\leq s\leq t}\! \ai (t-\rho)\ai(s-\rho)\check {\rm A}_{out}(t+\rho)\check {\rm A}_{in}(s+\rho)dsdt\\
& +2i\pi^2r_0(0)^2\eps^2\iint_{0\leq t\leq s}\ai (t-\rho)\! \ai(s-\rho)\check {\rm A}_{in}(t+\rho)\check {\rm A}_{out}(s+\rho)dsdt\\
& + \ord (h),
\end{aligned}
$$
where we have set,
$$
\check {\rm A}_{out}:= \check\ai -i\check\bi \quad ; \quad \check {\rm A}_{in}:= \check\ai +i\check\bi.
$$
Exchanging $t$ and $s$ in the second integral, and replacing $\check {\rm A}_{out}$ and $\check {\rm A}_{in}$ by their expressions, an elementary computation (plus the fact that $\im\,\rho=\ord(\eps)$) leads to the required approximation of $\beta_{1,R}$.

The same procedure (but somehow simpler) shows that $\beta_{1,L}$, $\beta_{2,L}$ and $\beta_{2,R}$ are $\ord (\eps^2)$ (note that, for $\beta_{1,L}$ and $\beta_{2,R}$, one must use the fact that the integral $\int_0^\infty\int_0^\infty (ts)^{-{\frac12}}e^{-|t^{\frac32} -s^{\frac32}|}dtds$ is finite).
Finally, concerning $\beta_{1,L}$, we see on (\ref{defbeta1L}) that it involves only functions that are real when $E$ is real, and the same kind of estimates as for $\beta_{1,R}$ show that it is real up to $\ord (h)$.
\end{proof}

\section{Quantization condition}
\label{sect6}

In order to simplify the writing, we will use the following notations: 

If $u_1$ is any of the functions $u_{1,L}^\pm$ or $u_{1,R}^\pm$, we set,
$$
{\mathbf u_1}:=\left[ 
\begin{array}{c}
u_1(0) \\
\widetilde\partial u_1(0)\\
0\\
0
\end{array}
\right],
$$
and if $u_2$ is any of the functions $u_{2,L}^\pm$ or $u_{2,R}^\pm$, we set,
$$
{\mathbf u_2}:=\left[ 
\begin{array}{c}
0 \\
0\\
u_2(0)\\
\widetilde\partial u_2(0)
\end{array}
\right].
$$
With these notations, we see on Proposition \ref{calculen0} that we have,
\be
\W_0(E):=\widetilde\W (w_{1,L}, w_{2,L}, w_{1,R}, w_{2,R}) =\det ({\mathbf v}_{1,L},{\mathbf v}_{1,R}, {\mathbf v}_{2,L}, {\mathbf v}_{2,R}) +\ord (h),
\ee
with,
$$
\begin{aligned}
&{\mathbf v}_{1,S}:={\mathbf u_{1,S}^-}+\alpha_{1,S}{\mathbf u_{2,S}^+} +\beta_{1,S}{\mathbf u_{1,S}^+};\\
&{\mathbf v}_{2,S}:={\mathbf u_{2,S}^-}+\alpha_{2,S}{\mathbf u_{1,S}^+} +\beta_{2,S}{\mathbf u_{2,S}^+}\quad (S=L,R).
\end{aligned}
$$
Developing the determinant by multi-linearity, and observing that all the terms that involve at least three vectors with the same value of the index $j$ vanish, we obtain,
$$
\begin{aligned}
\W_0(E)=&\widetilde\W(u_{1,L}^-, u_{1,R}^-)\widetilde\W(u_{2,L}^-, u_{2,R}^-)\\
& + \beta_{2,R}\widetilde\W (u_{1,L}^-,u_{1,R}^-)\widetilde\W(u_{2,L}^-, u_{2,R}^+)\\
& +\beta_{2,L}\widetilde\W (u_{1,L}^-,u_{1,R}^-)\widetilde\W(u_{2,L}^+, u_{2,R}^-)\\
& + \alpha_{1,R}\alpha_{2,R}\widetilde\W (u_{1,L}^-, u_{1,R}^+)\widetilde\W(u_{2,R}^+, u_{2,L}^-)\\
& +\alpha_{1,R}\alpha_{2,L}\widetilde\W(u_{1,L}^-,u_{1,L}^+)\widetilde\W (u_{2,R}^-, u_{2,R}^+)\\
& + \beta_{1,R}\widetilde\W(u_{1,L}^-, u_{1,R}^+)\widetilde\W(u_{2,L}^-,u_{2,R}^-)\\
& +\alpha_{1,L}\alpha_{2,R}\widetilde\W(u_{2,L}^+,u_{2,L}^-)\widetilde\W(u_{1,R}^+, u_{1,R}^-)\\
& +\alpha_{1,L}\alpha_{2,L}\widetilde\W (u_{2,L}^+,u_{2,R}^-)\widetilde\W(u_{1,R}^-, u_{1,L}^+)\\
& +\beta_{1,L}\widetilde\W(u_{1,L}^+,u_{1,R}^-)\widetilde\W(u_{2,L}^-,u_{2,R}^-) +\ord(h).
\end{aligned}
$$

In particular, we observe that each $\alpha_{j,S}$ is always multiplied by another similar quantity, that is, by $\ord(\eps)$. As a consequence, an error on $\alpha_{j,S}$ of order $\eps^2$ will lead to a error of order $h$ in $\W_0(E)$, and thus, by Proposition \ref{calculen0suite}, we can replace $\alpha_{2,S}$ by $\alpha_{1,S}$ ($S=L,R$).
Then, computing the various Wronskians  that appear (see Appendix \ref{appendix3}), we find,
\be
\label{W0modh}
\begin{aligned}
-i\pi^2e^{-i\frac{\pi}4} \W_0(E)= &-4\sqrt{2} (\cos\frac{{\mathcal A}}{h})(1+\ord(\eps^2))
+4\sqrt{2}(\sin\frac{\mathcal A}{h})\alpha_{1,R}^2\\
& +8e^{\frac{i{\pi}}4}\alpha_{1,R}\alpha_{1,L}+2\sqrt{2}(\sin\frac{\mathcal A}{h})\beta_{1,R}+i\sqrt{2}(\sin\frac{\mathcal A}{h})\alpha_{1,L}^2\\
& +2\sqrt{2}(\sin\frac{{\mathcal A}}{h})\beta_{1,L}+\ord (h).
\end{aligned}
\ee
Finally, observing that we have,
$$
\begin{aligned}
& \alpha_{1,R}^2 = \frac{\pi^2 r_0(0)^2\eps^2}2 (2\mu_A\mu_B+i\mu_A^2-i\mu_B^2)+\ord(h);\\
& \alpha_{1,L}\alpha_{1,R}=\pi^2 r_0(0)^2\eps^2\sqrt{2}e^{i\frac{\pi}4}(\sin\frac{\mathcal A}{h})\mu_A(\mu_A-i\mu_B)+\ord(h+\eps^2|\cos\frac{\mathcal A}{h}|);\\
& \alpha_{1,L}^2=4\pi^2 r_0(0)^2\eps^2(\sin\frac{\mathcal A}{h})^2\mu_A^2+\ord(h+\eps^2|\cos\frac{\mathcal A}{h}|);\\
& \beta_{1,R}=\re\,\beta_{1,R}+i\pi^2 r_0(0)^2\eps^2(\mu_A^2+\mu_B^2)+\ord (h);\\
& \beta_{1,L} =\re\,\beta_{1,L}+\ord (h),
\end{aligned}
$$
we obtain,
\be
\label{W0modh1}
\begin{aligned}
-i\pi^2e^{-i\frac{\pi}4} \W_0(E)= &-4\sqrt{2} (\cos\frac{{\mathcal A}}{h})(1+\ord(\eps^2))+2\sqrt{2}(\sin\frac{\mathcal A}{h} )\re\, (\beta_{1,L}+\beta_{1,R})\\
& +4\sqrt{2}(\pi r_0(0)\eps)^2(\sin\frac{\mathcal A}{h} )\left(3\mu_A\mu_B +i(3+\sin^2\frac{\mathcal A}{h})\mu_A^2\right)\\
&  +\ord (h).
\end{aligned}
\ee

Now, by Proposition \ref{quantcondi1}, the quantization condition reads,
$$
\W_0(E)=0.
$$
Hence, in view of (\ref{W0modh}) and of Proposition \ref{calculen0}, if we set,
$$
b_0:=\frac1{2\eps^2}\re ( \beta_{1,L}+\beta_{1,R}) +3\pi^2r_0(0)^2\mu_A(\re\, \rho)\mu_B(\re\,\rho)=\ord(1),
$$
we have proved,
\begin{proposition}\sl
$E=\rho\eps^2 \in {\mathcal D_h}(C_0)$ is a resonance of $P$ if and only if,
\begin{equation}
\label{BS}
\cos \frac{{\mathcal A}(E)}{\e^3}  
=\e^2 F(E,\e),
\end{equation}
where
\begin{equation}
\label{BSR}
F(E,\e)= \left(b_0 +i\pi^2r_0(0)^2(3+\sin^2\frac{{\mathcal A}(E)}{\e^3})\mu_A(\re\,\rho)^2\right)\sin\frac{{\mathcal A}(E)}{\e^3}
+\ord (\e).
\end{equation}
\end{proposition}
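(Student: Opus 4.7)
The plan is to combine Proposition~\ref{quantcondi1}, which asserts that $E\in\mathcal{D}_h(C_0)$ is a resonance iff $\widetilde\W_0(E)=0$, with the asymptotic identity (\ref{W0modh1}) already derived for $\widetilde\W_0(E)$. All analytic input has been collected upstream; what remains is a brief algebraic rearrangement.

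First I would set the right-hand side of (\ref{W0modh1}) equal to zero and divide by $-4\sqrt{2}$, which places $\cos(\mathcal{A}(E)/h)\bigl(1+\ord(\eps^2)\bigr)$ on the left. By Proposition~\ref{calculen0suite} one has $\re(\beta_{1,L}+\beta_{1,R})=\ord(\eps^2)$, and since $h=\eps^3$ the $\ord(h)$ remainder can be written as $\eps^2\cdot\ord(\eps)$. Thus $\eps^2$ factors out of the right-hand side, leaving $\sin(\mathcal{A}(E)/h)$ multiplied by the bracket
$$\frac{\re(\beta_{1,L}+\beta_{1,R})}{2\eps^2}+3\pi^2 r_0(0)^2\mu_A(\re\rho)\mu_B(\re\rho)+i\pi^2 r_0(0)^2\bigl(3+\sin^2\tfrac{\mathcal{A}(E)}{h}\bigr)\mu_A(\re\rho)^2,$$
plus an $\ord(\eps)$ remainder. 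The first two (real) terms merge, by its very definition, into $b_0$, and we recognise exactly the bracket of (\ref{BSR}).

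To conclude I would divide both sides by $\bigl(1+\ord(\eps^2)\bigr)$. Because the right-hand side is already of size $\eps^2$, this operation shifts $F(E,\eps)$ only by $\ord(\eps^3)$, which is absorbed into its $\ord(\eps)$ remainder. The only bookkeeping subtlety—what counts here as the ``main obstacle''—is the systematic replacement of $\mu_A(\rho),\mu_B(\rho)$ by $\mu_A(\re\rho),\mu_B(\re\rho)$; this is justified by the smoothness of the Airy-type integrals $\mu_A,\mu_B$ combined with $\im\rho=\ord(\eps)$, so the substitution error is $\ord(\eps)$, again accommodated in (\ref{BSR}). The genuinely hard work—namely deriving (\ref{W0modh1}) itself and the asymptotics of $\alpha_{j,S},\beta_{j,S}$ in Proposition~\ref{calculen0suite}—has already been done in Sections~\ref{sect5} and \ref{sect6}, so the present proposition requires no new estimate.
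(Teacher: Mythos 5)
Your proposal is correct and follows the same route as the paper: set $\W_0(E)=0$ using Proposition~\ref{quantcondi1} and formula~(\ref{W0modh1}), divide by $-4\sqrt{2}$, factor out $\eps^2$ after writing $\ord(h)=\eps^2\,\ord(\eps)$, collect the real terms into $b_0$, and finally divide out the $(1+\ord(\eps^2))$ factor. The only cosmetic slip is that dividing by $(1+\ord(\eps^2))$ shifts $F$ by $\ord(\eps^2)$ rather than $\ord(\eps^3)$, but either way this is absorbed into the $\ord(\eps)$ remainder of~(\ref{BSR}), so the conclusion stands.
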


\begin{rem}\sl
The quantization condition \eq{BS} is 
of Bohr-Sommerfeld type associated with the
single potential well of $V_1(x)$.
The imaginary part of $F(E,\e)$ will give an estimate
on the width of resonances.
\end{rem}

\section{Completion of the proof}
\label{sect7}
In order to solve (\ref{BS}) in ${\mathcal D}_h(C_0)$ (where $C_0$ may actually vary a little bit in order to avoid ''border'' effects), we first observe that, near $E=0$, the roots of the equation $\cos({\mathcal A}(E)/h)=0$ are given by $E=e_k(h)$ with,
$$
e_k(h):={\mathcal A}^{-1}\left((k+\frac12)\pi h\right) \in\R \quad (k\in\Z).
$$
(Here, $k\in\Z$ must be taken such a way that ${\mathcal A}^{-1}((k+\frac12)\pi h)$ is effectively close to 0.) In particular, restricting to $E=\ord(\eps^2)$, and writing ${\mathcal A}(E) = {\mathcal A}(0) + E{\mathcal A}'(0) +\frac12E^2{\mathcal A}''(0)+\ord (E^3)$, we obtain the well known relation,
$$
e_k(h)=\lambda_k(h)\eps^2-\frac{\lambda_k(h)^2{\mathcal A}''(0)}{{2\mathcal A}'(0)}\eps^4  + \ord(\eps^6),
$$
where $\lambda_k(h)$ is defined in (\ref{deflambdakh}). In particular, the distance between two consecutive $e_k(h)$'s is of order $h$.  Moreover, if $E\in\C$ is such that $E=\ord (\eps^2)$ and ${\mathcal A}(E)$ stays at a distance greater than $\delta h$ from the set $\{ e_k(h)\, ;\, k\in\Z\}$, with $\delta >0$ constant, then $\cos({\mathcal A}(E)/h)$ remains at some fix positive distance from 0. As a consequence, for $\eps >0$ small enough, we can apply the Rouch\'e theorem and conclude that, for each $k$ such that $\lambda_k(h)=\ord (1)$, there exists a unique solution $E_k(h)$ to (\ref{BS}) such that,
$$
E_k(h) =e_k(h)+ o(h),
$$
and, conversely, all the roots of (\ref{BS}) in ${\mathcal D}_h(C_0)$ are of this type. 

Now, going back to equation (\ref{BS}), we immediately see that, actually, we have,
$$
E_k(h) =e_k(h)+ \ord (\eps^5),
$$
so that (\ref{reEk}) is proved in the case $\tau_1=\tau_2=1$

In order to prove (\ref{imEk}), we first observe that, by (\ref{reEk}), the equation (\ref{BS}) implies,
$$
\cos \frac{{\mathcal A}(E_k(h))}{\e^3}  
=\e^2 F(\lambda_k(h)\eps^2,\e)+\ord (\eps^3).
$$
Then, taking the local inverse of $\cos$ near $(k+\frac12)\pi$, and the inverse of $\mathcal A$ near $E=0$, (\ref{imEk}) (with $\tau_1=\tau_2=1$) immediately follows from (\ref{BSR}) and the fact that $\sin \frac{{\mathcal A}(\lambda_k(h)\eps^2)}{\e^3}  =(-1)^k +\ord(\eps^2)$.

When $\tau_1$ and $\tau_2$ are general positive numbers, we observe that all the constructions Sections \ref{sect3} to \ref{sect5} and of Appendix \ref{appendix2} remain completely unchanged. Therefore, the proof proceeds exactly in the same way, and the only differences are in the approximate values of $\xi_1(\eps^{-2}y)$ and $\xi_2(\eps^{-2}y)$. A very simple computation shows that they become,
$$
\begin{aligned}
\xi_1(\eps^{-2}y)=\tau_1^{\frac13}\left(y-\frac{\rho}{\tau_1}\right) +\ord(\eps^2);\\
\xi_2(\eps^{-2}y)=\tau_2^{\frac13}\left(y+\frac{\rho}{\tau_2}\right) +\ord(\eps^2).
\end{aligned}
$$
As a consequence, the approximations given in (\ref{approxalphabeta}) have to be changed, too, and indeed now the functions $\mu_A$ and $ \mu_B$ will depend also on the side where we are working (on $I_L$ or on $I_R$). On $I_R$ (that is, for $\alpha_{1,R}$ and $\beta_{1,R}$) they will just be as before, with $\ai (y-\rho), \bi(y-\rho)$ substituted by $\ai(\tau_1^{-2/3}(\tau_1y-\rho)), \bi(\tau_1^{-2/3}(\tau_1y-\rho))$,  and $\check\ai (y+\rho),\check \bi(y+\rho)$ substituted by $\check\ai(\tau_2^{-2/3}(\tau_2y+\rho), \check\bi(\tau_2^{-2/3}(\tau_2y+\rho)$. On $I_L$ (that is, for $\alpha_{1,L}$), $\mu_A$ and $ \mu_B$ become,
$$
\begin{aligned}
\widetilde\mu_A(t)=\int_{-\infty}^0\ai(\tau_1^{-2/3}(\tau_1y-\rho))\check\ai(\tau_2^{-2/3}(\tau_2y+\rho))dy ;\\
\widetilde\mu_B(t)=\int_{-\infty}^0\bi(\tau_1^{-2/3}(\tau_1y-\rho))\check\ai(\tau_2^{-2/3}(\tau_2y+\rho))dy.
\end{aligned}
$$
But the computations proceeds in a similar way, and since $\widetilde\mu_A(t)=\mu_2(t)$, the result follows in the general case, too.

\begin{remark}\sl {\bf Resonant states} By construction, the resonant state $\varphi_k$ associated with $E_k(h)$ can be written both as a linear combination of $w_{1,L}$ and $w_{2,L}$, and  a linear combination of $w_{1,R}$ and $w_{2,R}$ (all computed at $E=E_k(h)$). The coefficients can actually be computed (up to $\ord (h)$) by using Proposition \ref{estw1S}, identifying each $w_{j,S}$ with the  vector $\vec{\mathbf w}_{j,L}$ of $\R^4$  given by,
$$
\vec{\mathbf w}_{j,L}:= \left[\begin{array}{c}
w_{j,S}^-(0)\\
\widetilde\partial w_{j,S}^-(0)
\end{array}\right].
$$
\end{remark}
Then, the approximations of the various functions involved with their Airy representation (obtained from Propositions \ref{yaf1} and \ref{yaf2}) permit us to write $\varphi_k$ as,
\be
\varphi_k = w_{1,L}-\mu w_{2,L} = \lambda w_{1,R}+\nu w_{2,R},
\ee
with,
$$
\begin{aligned}
&  \lambda = \sin\frac{{\mathcal A}}{h}+\ord (h)=(-1)^k +\ord (\eps^2);\\
& \nu = -2\lambda \alpha_{1,R} + \ord (h)= 4\eps \pi r_0(0) \mu_A(\lambda_k(h)) +\ord (h);\\
& \mu = \cos \frac{{\mathcal A}}{h}-\frac12(\beta_{1,L}+\lambda \beta_{1,R}+\nu\alpha_{2,R})+\ord (h) =\ord (\eps^2).
\end{aligned}
$$
Using the various asymptotic behaviours of the functions $w_{j,S}$'s, one can derive the (semiclassical) asymptotic behaviour of $\varphi_k$ in all of $\R$.

%%%%%%%%%%%%%%%%%%%%%%%%%%%%%%%%%%%%%%%%%%%%%%%
\section{Appendix}
\label{sect8}
\subsection{Inhomogeneous Airy equations}
\label{appendix1}

Let ${\rm Ai}\,(y)$ be the Airy function, which is characterised
as the solution to the Airy equation
\begin{equation}
\label{airy}
-u''(y)+yu(y)=0
\end{equation}
with exponential decay as $y$ tends to $+\infty$,
$$
{\rm Ai}\,(y) = \frac{1}{2\sqrt{\pi}}\, y^{-\frac{1}{4}}\, e^{-\frac{2}{3}y^{3/2}}(1+\ord(y^{-1}))
\qquad (y\to +\infty).
$$
It is oscillating when $y<0$, and its behaviour is, 
$$
{\rm Ai}\,(y) = \frac{1}{\sqrt{\pi}} \, (-y)^{-\frac{1}{4}} \sin \left(\frac{2}{3}(-y)^{3/2} + \frac{\pi}{4}\right)(1+\ord(|y|^{-1}))
\qquad {\rm as} \quad y\to -\infty.
$$
Moreover, the asymptotic behaviour of its derivative $\ai'(y)$ is obtained by formally differentiating the previous ones, and these asymptotic behaviours remain valid in sufficiently small complex sectors around the real line.

We define another solution ${\rm Bi}\,(y)$ to the Airy equation
by the asymptotic behavior as $y\to -\infty$
$$
{\rm Bi}\,(y) = \frac{-1}{\sqrt{\pi}} \, (-y)^{-\frac{1}{4}} \sin \left(\frac{2}{3}(-y)^{3/2} - \frac{\pi}{4}\right)(1+\ord(|y|^{-1}))
\qquad (y\to -\infty).
$$
${\rm Bi}\,(y)$ is positive and grows exponentially for $y>0$,
and satisfies
$$
{\rm Bi}\,(y) = \frac{1}{\sqrt{\pi}}\, y^{-\frac{1}{4}}\, e^{\frac{2}{3}y^{3/2}}(1+\ord(y^{-1}))
\qquad (y\to +\infty).
$$
From the asymptotic behaviors of ${\rm Ai}\,(y)$ and
${\rm Bi}\, (y)$ as $y\to -\infty$, we easily see the following properties.
At first, the solutions
\begin{equation}\label{rightgoingAiry}
{\rm Ai}\,(y)-i{\rm Bi}\,(y)\sim 
\frac{e^{-\pi i/4}}{\sqrt{\pi}} \, (-y)^{-\frac{1}{4}} 
\exp \left(\frac{2i}{3}(-y)^{3/2}\right)\qquad (y\to -\infty);
\end{equation}
\begin{equation}\label{leftgoingAiry}
{\rm Ai}\,(y)+i{\rm Bi}\,(y)\sim 
\frac{e^{\pi i/4}}{\sqrt{\pi}} \, (-y)^{-\frac{1}{4}} 
\exp \left(-\frac{2i}{3}(-y)^{3/2}\right)\qquad (y\to -\infty),
\end{equation}
are outgoing and incoming respectively
for negative $y$, and secondly, the wronskian of
${\rm Ai}\,(y)$ and
${\rm Bi}\, (y)$ is given by
$${\mathcal W}[{\rm Ai},{\rm Bi}] := 
{\rm Ai}\, (y){\rm Bi}'\, (y)
-{\rm Ai}'\, (y){\rm Bi}\, (y)=\pi^{-1}.$$

Set
$$
K(y,z):=-\pi \left\{\ai (y)\bi (z)
-\ai (z)\bi (y)\right\},
$$
and define the integral operators $\K$ 
and $\widetilde{\K}$ 
for $f\in C_0^\infty(\R)$ 
\be
\label{defKKtilde}
\K [f](y):=\int_y^0 K(y,z)f(z)dz\, ;\,
\widetilde \K [f](y):=\int^y_0 K(-y,-z)f(z)dz.
\ee
The function $\K [f](y)$ gives a particular solution to the inhomogeneous equation
$$
-u''+yu=f,
$$
while $\widetilde \K [f](y)$ gives
 a particular
solution to
$$
-u''-yu=f.
$$
Notice that there exists a symmetric property between $\K$ 
and $\widetilde{\K}$, namely,
\begin{equation}\label{symmetric property} 
\K[f](-y) 
= \widetilde{\K}[\check{f}](y),
\end{equation}
where $\check f(y)=f(-y)$. Moreover, if $\rho\in\C$, then the operators ${\K}_\rho$ and $ \widetilde{\K}_\rho$ defined by,
\be
\label{defKrho}
{\mathcal K}_\rho[f] (y):= \int_y^0 K(y-\rho, z-\rho)f(z)dz\,;\,\widetilde{\mathcal K}_\rho[f] (y):= \int_0^y K(-y-\rho, -z-\rho)f(z)dz
\ee
give solutions to the equations
$$
-u''+(y-\rho)u=f\quad ;\quad -u''-(y+\rho)u =f,
$$
respectively.
\begin{remark}\sl Observe that all these constructions remain valid when $y$ becomes complex, as long as $\im\, y$ stays bounded. In that case, the integrals must be taken along any complex curve joining $y$ to 0.
\end{remark}

\subsection{Yafaev's constructions}
\label{appendix2}

In this appendix, we recall and extend the constructions made in \cite{Ya} for the scalar Schr\"odinger equation $(P_j-E)u=0$. In \cite{Ya} such constructions are made for real $E$ only, and they just concern solutions decaying at infinity. Here, we need to consider complex values of $E$ and exponentially large or oscillating solutions, too. 

We fix $x_0\in (x^*,0)$, and we first treat the case $j=1$. For $E$ small enough, we denote by $x_1=x_1(E)$ the only point near $0$ where $V_1-E$ vanishes (in particular, $x_1(E)=E+\ord(E^2)$ depends analytically on $E$). In the particular case where $E$ is real, we can define as in \cite{Ya},
\be
\begin{aligned}
& \xi_1(x;E) =\left(\frac32 \int_{x_1(E)}^x \sqrt{V_1(t)-E}\, dt\right)^{\frac23} \quad \mbox{when } x\geq x_1;\\
& \xi_1(x;E) =-\left(\frac32 \int_x^{x_1(E)} \sqrt{E-V_1(t)}\, dt\right)^{\frac23} \quad \mbox{when } x_0\leq x\leq x_1.
\end{aligned}
\ee
Then, it is easy to check that $\xi_1(x;E) = x-x_1(E) + \ord((x-x_1)^2)$ as $x\to x_1$, and that $\xi_1$ depends analytically on $x$ and $E$ for $x\in (x_0,+\infty)$, $E$ small enough. Since also $V_1$ has a positive limit at $+\infty$, we see that we can extend analytically $\xi_1$ to a complex neighbourhood of $(x_0,+\infty)\times\{0\}$. Then, $\xi_1(x)$ satisfies $(\xi_1')^2\xi_1 = V_1-E$ and $\re\, \xi_1' >0$ everywhere on $[x_0,+\infty)$. In particular, when $E\in{\mathcal D}_h(C_0)$ is fixed and $x$ varies in $(x_0,+\infty)$, then $\xi_1$ describes a smooth complex curve parametrised by $x$, with $\im \xi_1(x) =\ord (h)$ uniformly. From now on, in order to simplify the notations, we drop the dependance of $\xi_1$ with respect to $E$.
The result is (see also \cite{Ya}, Theorem 2.5),

\begin{proposition}\sl
\label{yaf1}
 Let $E\in{\mathcal D}_h(C_0)$. Then, the equation $(P_1 -E)u=0$ admits two solutions $u_{1,R}^\pm$ on $\R$, such that, as $x\to +\infty$, (and uniformly with respect to $h>0$ small enough),
 $$
u_{1,R}^\pm(x) \sim (1+\ord (h))\frac{h^{\frac16}}{\sqrt\pi}(V_1(x) - E)^{-1/4} e^{ \pm\int_{x_1(E)}^x\sqrt{V_1(t) - E}\, dt/h},
$$
 and, as $h\to 0_+$,
$$
\begin{aligned}
& u_{1,R}^- (x) =2(\xi_1'(x))^{-\frac12}\ai (h^{-\frac23}\xi_1(x))(1+\ord(h))\\
& \hskip 7cm \mbox{  on } [x_0, +\infty)\cap\{\re\,\xi_1(x) \geq0\};\\
& u_{1,R}^- (x) =2(\xi_1'(x))^{-\frac12}\ai (h^{-\frac23}\xi_1(x))+\ord(h (1+h^{-2/3}|\xi_1(x)|)^{-\frac14})) \\
& \hskip 7cm\mbox{  on } [x_0, +\infty)\cap\{\re\,\xi_1(x) \leq0\};\\
& u_{1,R}^+ (x) =(\xi_1'(x))^{-\frac12}\bi (h^{-\frac23}\xi_1(x))(1+\ord(h))\\
& \hskip 7cm \mbox{  on } [x_0, +\infty)\cap\{\re\,\xi_1(x) \geq0\};\\
& u_{1,R}^+ (x) =(\xi_1'(x))^{-\frac12}\bi (h^{-\frac23}\xi_1(x))+\ord(h (1+h^{-2/3}|\xi_1(x)|)^{-\frac14}))\\
& \hskip 7cm \mbox{  on } [x_0, +\infty)\cap\{\re\,\xi_1(x) \leq0\}.
\end{aligned}
$$
\end{proposition}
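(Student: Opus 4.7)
The plan is to carry out a Liouville--Green reduction to a perturbed Airy equation and then solve by Volterra iteration with the kernel $K(y,z)$ of Appendix \ref{appendix1}. Concretely, I would make the ansatz $u(x) = (\xi_1'(x))^{-1/2}\Phi(h^{-2/3}\xi_1(x))$; using the defining relation $(\xi_1')^2\xi_1 = V_1-E$, a direct computation shows that $\Phi(\eta)$ satisfies
\begin{equation*}
-\Phi''(\eta) + \eta\,\Phi(\eta) = h^{4/3}\sigma(x(\eta))\,\Phi(\eta),
\end{equation*}
where $\sigma = (\xi_1')^{1/2}\bigl[(\xi_1')^{-1/2}\bigr]''\!\big/(\xi_1')^{2}$ is a bounded analytic function of $x$ on a complex neighbourhood of $[x_0,+\infty)$ (bounded even at $\eta=0$ because $\xi_1$ extends analytically across $x_1(E)$ with $\xi_1'(x_1)>0$). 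Thanks to the control on $\mathcal{D}_h(C_0)$ described before the statement, the curve $\{\xi_1(x):x\in[x_0,+\infty)\}$ stays within $O(h)$ of the real axis, so all subsequent contour integrals make sense.

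Next, I would invert the Airy operator using $\K_\rho$ from \eqref{defKrho} (with $\rho=0$) to reformulate the problem as two Volterra fixed-point equations:
\begin{equation*}
\Phi^{-}(\eta) = 2\ai(\eta) - h^{4/3}\int_{\eta}^{+\infty}\!\!\!K(\eta,\zeta)\,\sigma(\zeta)\,\Phi^{-}(\zeta)\,d\zeta,\qquad
\Phi^{+}(\eta) = \bi(\eta) + h^{4/3}\int_{0}^{\eta}K(\eta,\zeta)\,\sigma(\zeta)\,\Phi^{+}(\zeta)\,d\zeta.
\end{equation*}
The integration contour runs along the image of $[x_0,+\infty)$ by $\xi_1/h^{2/3}$; for $\Phi^{-}$ starting from $+\infty$ preserves the required decay, while for $\Phi^{+}$ the base point $0$ at the turning point selects the Bi-branch. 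I would solve each equation by Neumann iteration in the weighted norm $\|\Phi\|_{\pm} = \sup_\eta\,w_\pm(\eta)^{-1}|\Phi(\eta)|$, where $w_\pm(\eta)$ is the envelope of $\ai$ or $\bi$ respectively, i.e., essentially $(1+|\eta|)^{-1/4}e^{\mp\frac{2}{3}(\eta_+)^{3/2}}$. Using the product structure of $K(\eta,\zeta)$ and the classical asymptotics of $\ai,\bi$, one shows that the integrand, after dividing by the weight ratio, is integrable uniformly in $\eta$, so that the iteration map has operator norm $O(h^{4/3}\cdot h^{-1/3})=O(h)$ (the loss of $h^{-1/3}$ coming from the $L^1$-mass of $\sigma\,w_\pm$ along the contour).

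The different shape of the error terms in the two regions is then a direct reading of the same estimate: on $\{\re\xi_1\ge 0\}$ the weight $w_\pm$ is exponentially large/small, so the bound $|\Phi-\Phi_0|\le Ch\,w_\pm$ reads as a relative error $(1+O(h))$ once the leading $\ai$ or $\bi$ is factored out; on $\{\re\xi_1\le 0\}$, where $\ai$ and $\bi$ oscillate, $w_\pm(\eta)\sim(1+|\eta|)^{-1/4}$ and the same bound becomes the additive error $O(h(1+h^{-2/3}|\xi_1|)^{-1/4})$. Finally, the asymptotic behaviour at $x\to+\infty$ is obtained by plugging the $+\infty$-asymptotics of $\ai$, $\bi$ and the identity $\frac{2}{3}\xi_1(x)^{3/2}=\int_{x_1(E)}^{x}\sqrt{V_1(t)-E}\,dt$, together with $(\xi_1')^{-1/2}\xi_1^{-1/4}=(V_1-E)^{-1/4}$, into the representation $u^{\pm}=(\xi_1')^{-1/2}\Phi^{\pm}(h^{-2/3}\xi_1)$.

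The main obstacle is the construction of a weighted Banach space that behaves well \emph{uniformly through the turning point}, so that the Volterra kernel $h^{4/3}K(\eta,\zeta)\sigma(\zeta)$ has small norm both in the exponential regime (where $K$ can be very large) and in the oscillatory regime (where one only has algebraic decay). Once this is achieved, the extension from real $E$ (Yafaev's setting in \cite{Ya}) to $E\in\mathcal{D}_h(C_0)$ is essentially cosmetic, because the analytic extension of $\xi_1$ only shifts the integration contour by $O(h)$ and $\sigma$ remains bounded analytic there; this yields $\Phi^{\pm}$ depending holomorphically on $E$ with the estimates claimed.
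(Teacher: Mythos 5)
Your proposal follows essentially the same route as the paper: the Liouville--Green change of variable $f(h^{-2/3}\xi_1)=\xi_1^{\prime\,1/2}u$ reducing to a perturbed Airy equation with right-hand side $O(h^{4/3}(1+|\xi_1|)^{-2})$, followed by Volterra iteration using the Airy resolvent kernel $K(y,z)$, with the $\ai$-solution anchored at $+\infty$ and the $\bi$-solution anchored at the turning point, and the key contraction bound $\|h^{4/3}\sigma w_\pm\|_{L^1}=O(h^{4/3}\cdot h^{-1/3})=O(h)$. The only (cosmetic) difference is that you phrase the contraction in a single weighted norm spanning the turning point, whereas the paper splits into $\Gamma_\pm=\{\pm\re\,\xi_1\ge 0\}$, using the substitution $g=f/\bi$ on $\Gamma_+$ (which is the same as your relative weight there) and the algebraic weight $(1+|t|)^{1/4}$ on $\Gamma_-$; also note a likely sign slip in your envelope $w_\pm\sim(1+|\eta|)^{-1/4}e^{\mp\frac23\eta_+^{3/2}}$ (the $+$ superscript should carry the growing exponential so that $w_+\sim|\bi|$).
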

\begin{proof}
The proof for $u_{1,R}^- $ is the same as in \cite{Ya}, with the difference that, here, $E$ may be complex. However, since we have $\im E=\ord (h)$, all the estimates in \cite{Ya} remain valid. Observe, in particular, that when $\re\, \xi_1(x)\geq 0$, then $\im\, (\xi_1(x))^{3/2}=\ord (h)$, and thus $\ai (h^{-\frac23}\xi_1(x))\not= 0$.

Therefore, let us focus on the construction of $u_{1,R}^+$. As in \cite{Ya}, Section 3, setting $t:=h^{-2/3}\xi_1(x)$ and $f( t):=\xi_1'(x)^{\frac12}u(x)$, the equation $(P_1-E)u=0$ becomes,
\be
\label{eqfN}
-f''(t) + tf(t)=R(t)f(t),
\ee
with, 
\be
\begin{aligned}
& R(t) =h^{4/3}p(h^{2/3}t);\\
& p(x):=\left[(\xi_1'(x))^{-1/2}\right]''(\xi_1'(x))^{-3/2}=\ord (1+|\xi_1(x)|)^{-2}.
\end{aligned}
\ee
(In the last estimate, we have used the fact that $|V_1'(x)|^2+|V_1''(x)|=\ord((1+|x|)^{-2})$.) Defining $\K$ as in (\ref{defKKtilde}), we reduce (\ref{eqfN}) to the Volterra equation,
\be
\label{volt}
f =\bi + \K[Rf].
\ee
Then, a continuous solution of (\ref{volt}) will be solution of (\ref{eqfN}), too, and we expect it to have the right behaviour at infinity. Moreover, it is enough to solve (\ref{volt}) separately on $\re\, t\geq 0$ and $\re \, t\leq 0$ (where, in any case, $t$ remains on the curve
$\Gamma:=\{ h^{-2/3}\xi_1(x)\, ;\, x\in[x_0,+\infty)\}$).

When $\re\, t\geq 0$, one has $\bi (t)\not =0$, and we set $g:=f/\bi$. Then $g$ must be solution to,
\be
\label{voltred}
g=1+Lg,
\ee
with,
$$
Lg(t):=\pi\int_0^t \left(\frac{\ai(t)}{\bi(t)}\bi(s)^2-\ai(s)\bi(s)\right)R(s)g(s)ds.
$$
Using the asymptotic behaviours of $\ai$ and $\bi$ at infinity, and the fact that $\int_0^\infty (1+s)^{-\frac12}(1+h^{\frac23}s)^{-2}ds =\ord (h^{-\frac13})$, we see that $||L||_{C_b^0(\Gamma\cap\{\re\, t\geq 0\})} =\ord (h)$. Therefore, (\ref{voltred}) can be solved by iteration on $\Gamma_+:=\Gamma\cap\{\re\, t\geq 0\}$, and the corresponding solution to (\ref{volt}) satisfies,
$$
f=\bi(t)(1+\ord (h))\, \mbox{ uniformly}.
$$
On $\Gamma_-:=\Gamma\cap\{\re\, t\leq 0\}$, since $t =\ord(h^{-2/3})$ there, and $|\ai (s)| +|\bi (s)| =\ord ((|1+|s|)^{-1/4})$ between 0 and $t$, we obtain 
$$
(1+|t|)^{\frac14}|\K[Rf](t)|=\ord (h)\sup_{s\in\Gamma_-}(1+|s|)^{\frac14}|f(s)| ,
$$
and thus, (\ref{volt}) can be solved by iteration there, leading to a solution that satisfies,
$$
f=\bi(t)+\ord (h(1+|t|)^{-\frac14})\, \mbox{ uniformly}.
$$
Moreover, the behaviour at infinity of $f$ is obtained from that of $\bi$ and of $\xi_1(x)$. This completes the proof of the proposition.
\end{proof}
\begin{remark}\sl
\label{derivees}
We also see on (\ref{voltred}) that $g' = \ord (h(1+|t|)^{1/2})$. This leads to,
$$
\begin{aligned}
h^{2/3}(u_{1,R}^+)' (x) = &(\xi_1'(x))^{\frac12}\bi' (h^{-\frac23}\xi_1(x))  (1+\ord(h))\\
&+\ord(h^{2/3})(h^{2/3}+|\xi_1(x)|)^{1/2}(\xi_1'(x))^{-\frac12}\bi (h^{-\frac23}\xi_1(x))
\end{aligned}
$$
on $[x_0, +\infty)\cap\{\re\,\xi_1(x) \geq0\}$, and similar estimates are valid on $[x_0, +\infty)\cap\{\re\,\xi_1(x) \leq0\}$, and for $h^{2/3}(u_{1,R}^-)'$, too. For instance, on $[x_0, +\infty)\cap\{\re\,\xi_1(x) \leq0\}$, one obtains,
\be
\label{uRprime}
\begin{aligned}
h^{2/3}(u_{1,R}^+)' (x) = (\xi_1'(x))^{\frac12}& \bi' (h^{-\frac23}\xi_1(x))  \\
&+\ord(h^{5/6})(1+(h^{2/3}+|\xi_1(x)|)^{-1/4});\\
h^{2/3}(u_{1,R}^-)' (x) =2 (\xi_1'(x))^{\frac12}& \ai' (h^{-\frac23}\xi_1(x))  \\
&+\ord(h^{5/6})(1+(h^{2/3}+|\xi_1(x)|)^{-1/4}).
\end{aligned}
\ee
\end{remark}
\begin{remark}\sl 
Similar constructions can be done on $(-\infty, x_0]$, leading to solutions $u_{1,L}^\pm$ with the asymptotic behaviour,
$$
u_{1,L}^\pm(x) \sim \frac{h^{\frac16}}{\sqrt\pi}(V_1(x) - E)^{-1/4} e^{\mp \int_{x^*(E)}^x\sqrt{V_1(t) - E}\, dt/h}\qquad (x \to -\infty),
$$
where $x^*(E)$ is the only point near $x^*$ where $V_1-E$ vanishes.
\end{remark}

Now, we treat the case $j=2$. Here the situation is a bit different, because the set where $V_2< 0$ is unbounded, and also because there is one turning point only. This actually permits us to directly obtain the asymptotic of the solutions both at $-\infty$ and at $+\infty$. We denote by $x_2(E)$ the unique point near 0 where $V_2-E$ vanishes, and, when $E$ is real, we set,
\be
\begin{aligned}
\label{defxi2}
& \xi_2(x;E)) =\left(\frac32 \int_{x_2(E)}^x \sqrt{E-V_2(t)}\, dt\right)^{\frac23} \quad \mbox{when } x\geq x_2;\\
& \xi_2(x;E) =-\left(\frac32 \int_x^{x_2(E)} \sqrt{V_2(t)-E}\, dt\right)^{\frac23} \quad \mbox{when } x\leq x_2.
\end{aligned}
\ee
As before,  we extend analytically this definition to complex values of $E$, and
we have,
\begin{proposition}\sl 
\label{yaf2}
Let $E\in{\mathcal D}_h(C_0)$. Then, there exist two solutions $u_{2,L}^\pm$ to equation $(P_2 -E)u=0$ on $\R$, and two constants $a_2^\pm = 1+\ord (h)$, such that, 
\be
 u_{2,L}^\pm(x) \sim (1+\ord (h))\frac{h^{\frac16}}{\sqrt\pi}(V_2(x)-E )^{-1/4} e^{\mp\int_{x_2(E)}^x\sqrt{V_2(t)-E}\, dt/h}\qquad (x \to -\infty);
\ee
 \be
 \label{asympsoolsort}
 \begin{aligned}
\frac12a_2^-& u_{2,L}^-(x) \pm ia_2^+u_{2,L}^+ (x)\\
& \sim (1+\ord (h))\frac{h^{\frac16}}{\sqrt\pi}(E-V_2(x) )^{-1/4} e^{\mp i \int_{x_2(E)}^x\sqrt{E-V_2(t)}\, dt/h}
\quad  (x \to +\infty),
\end{aligned}
\ee
uniformly with respect to $h>0$ small enough,
and, 
$$
\begin{aligned}
& u_{2,L}^- (x) =2(\xi_2'(x))^{-\frac12}\check\ai (h^{-\frac23}\xi_2(x))+\ord(h (1+h^{-2/3}|\xi_2(x)|)^{-\frac14}))\\
& \hskip 8cm  \mbox{  on } \R\cap\{\re\,\xi_2(x) \geq0\};\\
& u_{2,L}^- (x) =2(\xi_2'(x))^{-\frac12}\check\ai (h^{-\frac23}\xi_2(x))(1+\ord(h)) \mbox{  on } \R\cap\{\re\,\xi_2(x) \leq0\};\\
& u_{2,L}^+ (x) =(\xi_2'(x))^{-\frac12}\check\bi (h^{-\frac23}\xi_2(x))+\ord(h (1+h^{-2/3}|\xi_2(x)|)^{-\frac14}))\\
& \hskip 8cm  \mbox{  on } \R\cap\{\re\,\xi_2(x) \geq0\};\\
& u_{2,L}^+ (x) =(\xi_2'(x))^{-\frac12}\check\bi(h^{-\frac23}\xi_2(x))(1+\ord(h)) \mbox{  on } \R\cap\{\re\,\xi_2(x) \leq0\},
\end{aligned}
$$
uniformly as $h\to 0_+$.
\end{proposition}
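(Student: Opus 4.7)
My plan is to follow closely the strategy used in the proof of Proposition \ref{yaf1} (Yafaev's construction for $(P_1-E)u=0$), adapting it to the specific geometry of $V_2$. The Liouville--Green substitution $t:=h^{-2/3}\xi_2(x)$ and $f(t):=\xi_2'(x)^{1/2}u(x)$, with $\xi_2$ given by (\ref{defxi2}) so that $(\xi_2')^2\xi_2=E-V_2$, transforms $(P_2-E)u=0$ into the perturbed \emph{flipped} Airy equation
$$
-f''(t)-tf(t)=R(t)f(t),
$$
with $R(t)=h^{4/3}p(h^{2/3}t)$ and $p(x)=[(\xi_2')^{-1/2}]''(\xi_2')^{-3/2}$. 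The sign in front of $tf$ is opposite to that of the $\xi_1$ case because $\xi_2>0$ precisely in the oscillating region; consequently the appropriate unperturbed solutions are $\check\ai(t)$ and $\check\bi(t)$, and the relevant Green operator is $\widetilde{\mathcal K}$ from (\ref{defKKtilde}).

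I then solve by iteration the two Volterra integral equations $f=\check\ai+\widetilde{\mathcal K}[Rf]$ (for $u_{2,L}^-$) and $f=\check\bi+\widetilde{\mathcal K}[Rf]$ (for $u_{2,L}^+$). On the classically forbidden side $\{\re\,\xi_2\leq 0\}$, which corresponds to $x\leq x_2(E)$ and extends to $-\infty$, the analysis is identical to that of Proposition \ref{yaf1}: for the exponentially growing solution one passes to $g:=f/\check\bi$ and obtains a Volterra operator of norm $\ord(h)$, while for the exponentially decaying solution one uses the $(1+|t|)^{-1/4}$ envelope of the bounded oscillating tail. This delivers the stated Airy representations on $\{\re\,\xi_2\leq 0\}$, and via the known asymptotics of $\ai$ and $\bi$ at $+\infty$ applied to $\check\ai(t)=\ai(-t)$, $\check\bi(t)=\bi(-t)$ as $t\to-\infty$, it also produces the decaying/growing asymptotics of $u_{2,L}^\pm$ at $x=-\infty$.

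The principal new obstacle, absent from Proposition \ref{yaf1}, is controlling the Volterra iteration on the \emph{unbounded} oscillating region $\{\re\,\xi_2\geq 0\}$, which now extends all the way to $x=+\infty$. There $\check\ai,\check\bi$ are merely bounded (with $(1+|t|)^{-1/4}$ envelope), so integrability of the perturbation at infinity is required. I will show that $\int_0^{+\infty}|R(t)|\,dt$ reduces, after the change of variables $x=\xi_2^{-1}(h^{2/3}t)$, to $h^{2/3}\int_{x_2(E)}^{+\infty}|p(x)|\,dx<+\infty$; this last finiteness follows from Assumptions (A1)--(A2) together with Cauchy estimates in the sector $\Gamma$, which yield polynomial decay $V_2^{(k)}(x)=\ord(|x|^{-k})$ and hence $p(x)=\ord((1+|x|)^{-2})$. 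The Volterra iteration then converges uniformly on $\{\re\,\xi_2\geq 0\}$, delivering the other two Airy-representation estimates with the uniform error $\ord(h(1+h^{-2/3}|\xi_2(x)|)^{-1/4})$.

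Finally, to convert the Airy representation into the outgoing and incoming asymptotics (\ref{asympsoolsort}), I form the combinations $\tfrac{1}{2}a_2^-u_{2,L}^-\pm ia_2^+u_{2,L}^+$, where $a_2^\pm=1+\ord(h)$ are defined as the limits at $t=+\infty$ of the Volterra corrections to $\check\ai$ and $\check\bi$ respectively (which exist thanks to the integrability established above). These combinations correspond to $\check\ai\pm i\check\bi$, whose asymptotics as $t\to+\infty$ follow from (\ref{rightgoingAiry})--(\ref{leftgoingAiry}) via the substitution $y\mapsto -t$. Transcribing back to the $x$-variable through the identity $\tfrac{2}{3}\xi_2(x)^{3/2}=\int_{x_2(E)}^x\sqrt{E-V_2(t)}\,dt$ and the amplitude identity $(\xi_2')^{-1/2}\xi_2(x)^{-1/4}=(E-V_2(x))^{-1/4}$ then yields (\ref{asympsoolsort}). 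The main quantitative care is to verify that the $\ord(h)$ errors remain uniform all the way to $x=+\infty$, which is exactly the content of the integrability argument in the previous paragraph.
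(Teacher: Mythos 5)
Your overall strategy is the right one, and you correctly identify the genuinely new difficulty relative to Proposition \ref{yaf1} (the oscillating region now extends to $+\infty$, so integrability of $R$ there has to be checked). But your setup for $u_{2,L}^-$ has a real gap: you propose to solve $f=\check\ai+\widetilde{\mathcal K}[Rf]$ with the operator $\widetilde{\mathcal K}$ of (\ref{defKKtilde}), whose base point is at $t=0$. On the forbidden side $t\to-\infty$, the Green kernel $K(-t,-s)=-\pi\bigl(\check\ai(t)\check\bi(s)-\check\ai(s)\check\bi(t)\bigr)$ contains the term $\check\ai(s)\check\bi(t)$, and $\check\bi(t)\sim e^{\frac23|t|^{3/2}}$ grows exponentially, so for $s$ near $0$ and $t\to-\infty$ the kernel explodes. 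A Volterra iteration from the base point $0$ therefore does not produce, nor even preserve, the exponentially decaying solution; generically it feeds in a $\check\bi$-component. This is precisely why the paper introduces a \emph{second} integral operator $\widetilde{\widetilde{\mathcal K}}$, defined in (\ref{defKtildedouble}) by integrating from $-\infty$ rather than from $0$, so that the base point sits where $\check\ai$ decays; only then does Yafaev's argument apply on $\{\re\,\xi_2\leq h^{-2/3}\}$. Your sentence about using ``the $(1+|t|)^{-1/4}$ envelope of the bounded oscillating tail'' on the forbidden side is also a symptom of this confusion: that envelope belongs to the oscillating (allowed) region, not to the exponentially-decaying forbidden region.

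A related omission: on $\Gamma_+=\{\re\,\xi_2\geq 0\}$ the paper does not simply iterate $\widetilde{\mathcal K}$ starting from $\check\ai$; it rewrites the base-point-$(-\infty)$ Volterra equation as
$f(t)=\check\ai(t)+\int_{-\infty}^0 K(-t,-s)R(s)f(s)\,ds+\widetilde{\mathcal K}[Rf](t)$, treating the $\int_{-\infty}^0$ piece as an inhomogeneous source (controlled by the already-constructed solution on the forbidden side) and iterating only the $\widetilde{\mathcal K}$ piece. Without this splitting, your $u_{2,L}^-$ on $\Gamma_+$ would not have the correct Cauchy data at $t=0$ to match the decaying solution on the other side.

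Finally, a smaller point concerning $a_2^\pm$. Defining them as ``the limits at $t=+\infty$ of the Volterra corrections to $\check\ai$ and $\check\bi$'' is not quite what is needed: the iterated correction $\widetilde{\mathcal K}[Rf]$ produces \emph{both} a $\check\ai$- and a $\check\bi$-component as $t\to+\infty$, so $u_{2,L}^-$ is asymptotic to a mixture $a\,\check\ai+b\,\check\bi$ with $a=1+\ord(h)$ but $b=\ord(h)\neq 0$, and similarly for $u_{2,L}^+$. One must then invert the $2\times 2$ (near-identity) coefficient matrix to identify the outgoing and incoming combinations. The paper instead side-steps this by building two solutions $f_\pm\sim(1+\ord(h))(\check\ai\pm i\check\bi)$ directly on $\R_+$ (using $g:=f/(\check\ai\pm i\check\bi)$, which is possible since $\check\ai\pm i\check\bi$ never vanish on $\R$), and then computing Wronskians with $u_{2,L}^\pm$ to obtain $a_2^\pm=1+\ord(h)$. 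Your route can be salvaged, but as written the assertion $a_2^\pm=1+\ord(h)$ is not established.

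Your sign convention $-f''-tf=Rf$ instead of $f''+tf=Rf$ is merely an immaterial overall sign on $R$, and your integrability argument $p(x)=\ord((1+|x|)^{-2})$ for the allowed side is consistent with what the paper uses.
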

\begin{proof} The procedure is the same as for the previous proposition, but this time, setting $f( h^{-2/3}\xi_2(x)):=\xi_2'(x)^{\frac12}u(x)$, the equation $(P_2-E)u=0$ becomes,
\be
\label{eqf}
f''(t) + tf(t)=R(t)f(t),
\ee
where $R(t) =\ord (h^{4/3} (1+h^{2/3}|t|)^{-2})$. In the case of $u_{2,L}^-$, we reduce (\ref{eqf}) to the Volterra equation,
\be
\label{voltnew}
f =\check\ai+ \widetilde{\widetilde\K}[Rf],
\ee
where $\widetilde{\widetilde\K}$ is defined by,
\be
\label{defKtildedouble}
\widetilde{\widetilde\K} [f](t):=\int_{-\infty}^y K(-t,-s)f(s)ds.
\ee
In that case, we can follow the procedure of \cite{Ya} on $\{ \re\, \xi_2(x)\leq h^{-\frac23}\}$, and obtain a solution with the required asymptotics (at $-\infty$ and for $h\to 0_+$). On the other hand, on $\Gamma_+:=\{\re\, \xi_2(x)\geq 0\}$, we rewrite (\ref{voltnew}) as,
$$
f(t)=\check\ai(t) + \int_{-\infty}^0K(-t,-s)R(s)f(s)ds+\widetilde\K[Rf](t),
$$
(where $\widetilde\K$ is as in (\ref{defKKtilde})), and the asymptotics at infinity of $\check\ai$ and $\check\bi$ give,
$$
\begin{aligned}
 \widetilde\K[Rf](t) & =\ord (h^{\frac43})\sup_{\Gamma_+}|f|\int_0^{|t|} (1+|t|)^{-\frac14}(1+s)^{-\frac14}(1+h^{\frac23}s)^{-2}ds \\
 & =\ord(h(1+|t|)^{-\frac14}))\sup_{\Gamma_+}|f|.
 \end{aligned}
$$
In addition, (see also \cite{Ya}, Formula (3.18)), one has,
$$
\begin{aligned}
\int_{-\infty}^0K(-t,-s)R(s)f(s)ds & =\ord(\frac{h^{\frac43}}{(1+|t|)^{\frac14}})\sup_{\Gamma_+}|f|\int_0^\infty s^{-\frac12}(1+h^{\frac23}s)^{-2}ds\\
&  =\ord (h(1+|t|)^{-\frac14})\sup_{\Gamma_+}|f|.
\end{aligned}
$$
Thus, (\ref{voltnew}) can be solved by iteration on $\Gamma_+$, too, and there the solution satisfies,
$$
f(t)=\check\ai(t) +\ord (h(1+|t|)^{-\frac14})).
$$
The result for $u_{2,L}^-$ follows by taking $u_{2,L}^-(x):= 2(\xi_2'(x))^{-1/2}f(h^{-2/3}\xi_2(x))$.

In the case of $u_{2,L}^+$ we reduce (\ref{eqf}) to the Volterra equation,
\be
\label{volt2}
f =\check\bi+ \widetilde\K[Rf].
\ee
First working on $\{\re \, t\leq 0\}$,  the same procedure as for Proposition \ref{yaf1} leads to a solution that satisfies $f=(1+\ord(h))\check\bi$.
Next, on $\{\re \, t\geq 0\}$,  the result follows exactly as for $u_{2,L}^-$, but this time there is no need to rewrite the Volterra equation.

The asymptotic of each solution at $-\infty$ follows from that of $\ai$ and $\bi$, and the fact that $\xi_2(x)\sim c x^{2/3}$ at $+\infty$ (with $c>0$ constant).

Finally, since $\check \ai \pm i\check\bi$ do not vanish on $\R$, by setting $g:=f/(\check \ai \pm i\check\bi)$, we see (e.g. as in the proof of Proposition \ref{yaf1}) that there exist two solutions $f_\pm$ to (\ref{eqf}) on $\R_+$ satisfying $f_\pm=(1+\ord (h))(\check \ai \pm i\check\bi)$. They give rise to two solutions $v_\pm$ to $(P_2 -E)v=0$ (that are conjugated of each other for real values of $E$) and by computing their Wronskians with $u_{2,L}^\pm$, we see that they are of the form $v_\pm=\frac12a_2^-u_{2,L}\pm ia_2^+u_{2,L}^-$ with $a_2^\pm =1+\ord (h)$, so  that (\ref{asympsoolsort}) follows, too.
\end{proof}
\begin{remark}\sl
Here again, estimates on $h^{2/3}(u_{2,L}^\pm)'$ can also be derived from the construction (see Remark \ref{derivees}).
\end{remark}
\begin{remark}\sl
Near infinity, all these constructions extend to a complex sector in $x$, with the same asymptotic behaviours as on the real.
\end{remark}
\begin{remark}\sl 
\label{R+}
These constructions can also be adapted to a problem on $\R_+$ (e.g., for radial solutions of a problem in $\R^n$), with potential $V_j(r)$ ($j=1,2$) behaving like $c_j/r^\alpha$, with $c_j>0$ constant and $0\leq\alpha <2$, as $r\to 0_+$. In this case, $-\infty$ is replaced by $r=0$, and the decaying solutions at $-\infty$ become solutions vanishing at 0. Thus, in the construction of $u_{j,L}^-$ ($j=1,2$), it is enough to replace $\check\ai$ (for instance in (\ref{voltnew}))  by a linear combination $\check\ai +\alpha_j\check\bi$, where $\alpha_j\in\R$ is chosen in such a way that $\check\ai(-h^{-2/3}\xi_j(0)) +\alpha_j\check\bi(-h^{-2/3}\xi_j(0))=0$ (here $\xi_j(r)$ is the corresponding change of variable similar to (\ref{defxi2})), and to use a fundamental solution of the Airy equation vanishing at $-h^{-2/3}\xi_j(0)$ (e.g., in (\ref{defKtildedouble}), one must replace $-\infty$ by $-h^{-2/3}\xi_j(0)$). Then the proof proceeds exactly in the same way.

For potentials behaving like $c_j/r^2$ at 0 (with $c_j=c_j(h)>0$), the adaptation is even simpler since, in that case, $\xi_j(r)\to -\infty$ as $r\to 0_+$, so that the construction remains the same.
\end{remark}

\subsection{Formulae}
\label{appendix3}
$$
\begin{aligned}
 & u_{1,R}^+\sim (\xi_1')^{-\frac12}\bi (h^{-\frac23}\xi_1)\sim \bi (y-\rho)\\
 & u_{1,R}^-\sim 2(\xi_1')^{-\frac12}\ai (h^{-\frac23}\xi_1)\sim 2\ai (y-\rho)\\
 & u_{2,L}^+\sim (\xi_2')^{-\frac12}\check\bi (h^{-\frac23}\xi_2)\sim \check\bi (y+\rho)\\
 & u_{2,L}^-\sim 2(\xi_2')^{-\frac12}\check\ai (h^{-\frac23}\xi_2)\sim 2\check\ai (y+\rho)\\
  & u_{1,L}^+\sim \frac12( \cos\frac{\mathcal A}{h})u_{1,R}^- -(\sin\frac{\mathcal A}{h})u_{1,R}^+\\
 & u_{1,L}^-\sim ( \sin\frac{\mathcal A}{h})u_{1,R}^- +2(\cos\frac{\mathcal A}{h})u_{1,R}^+\\
 & u_{2,R}^+\sim \sqrt{2}e^{i\pi/4}(\frac12 u_{2,L}^-+iu_{2,L}^+)\\
 & u_{2,R}^-\sim \frac1{\sqrt{2}}e^{i\pi/4}(\frac12 u_{2,L}^--iu_{2,L}^+)
\end{aligned}
$$
$$
\widetilde\W(u_{j,L}^-, u_{j,L}^+)\sim \frac{-2}{\pi }
\quad ;\quad 
\widetilde\W(u_{j,R}^-, u_{j,R}^+) \sim \frac{2}{\pi }
$$
$$
\widetilde\W(u_{1,L}^-, u_{1,R}^-) \sim \frac{-4}{\pi }\cos\frac{\mathcal A}{h}
\quad ;\quad 
\widetilde\W(u_{2,L}^-, u_{2,R}^-)\sim \frac{i\sqrt 2}{\pi }e^{i\frac{\pi}4}
$$
$$
\widetilde\W(u_{1,L}^\pm, u_{1,R}^\mp)=\frac2{\pi}\sin\frac{\mathcal A}{h}
$$
$$
\widetilde\W(u_{2,L}^-, u_{2,R}^+) \sim \frac{-2i\sqrt 2}{\pi }e^{i\frac{\pi}4}
\quad ;\quad 
\widetilde\W(u_{2,L}^+, u_{2,R}^-)\sim \frac{1}{\pi\sqrt 2 }e^{i\frac{\pi}4}
$$

{\bf Acknowledgements} A. Martinez was partially supported by Ritsumeikan University, where the final part of this work was done, and he would like to thank the Mathematical Department of this University for his warm hospitality between April and July 2015. In addition, S. Fujii\'e and T. Watanabe thank JSPS for its financial support, and the Department of Mathematics of the University of Bologna where part of this work was done.


\begin{thebibliography}{10}
 
  \bibitem[AgCo]{AgCo} 
 J.~Aguilar, J.M.~Combes\,:
 \newblock{\em A class of analytic perturbations for one-body Schr\"odinger Hamiltonians.}
 \newblock{Comm. Math. Phys.,} 22 (1971), no. 4, 269--279.
 
  \bibitem[Ba]{Ba} H.~Baklouti\,:
   \newblock{\em Asymptotique des largeurs de r\'esonances pour un mod\`ele d'effet tunnel microlocal.}
    \newblock{Ann. Inst. H. Poincare Phys. Theor.,} 68 (1998); no. 2, 179-228.
 
 \bibitem[DiSj]{DiSj} M.~Dimassi, S.~Sj\"ostrand\,:
\newblock{\em Spectral Asymptotics in the Semi-Classical Limit.}
\newblock{Cambridge University Press,} 1999.
 
  
 \bibitem[DyZw]{DyZw} S.~Dyatlov, M.~Zworski\,:
  \newblock{\em Mathematical theory of scattering resonances.}  Book in progress, http://math.mit.edu/~dyatlov/res/res.pdf
 
  \bibitem[FLM]{FLM} 
  S.~Fujii\'e, A.~Lahamar-Benbernou, A.~Martinez\,:
  \newblock{\em Width of shape resonances for non globally analytic potentials.}
  \newblock{J. Math. Soc. Japan,} 63 (2011), no. 1, 1-78. 
  
   \bibitem[FLN]{FLN} S.~Fujii\'e, C.~Lasser, L.~N\'ed\'elec\,:
   \newblock{\em Semiclassical resonances for a two-level Schr\"odinger operator with a conical intersection.}
     \newblock{Asympt. Anal.,} 65 (2009), no. 1-2, 17-58.
  
     \bibitem[FuRa]{FuRa} 
  S.~Fujii\'e, T.~Ramond\,:
   \newblock{\em Matrice de scattering et r\'esonanes associ\'ees \`a une orbite h\'et\'erocline.}
     \newblock{Ann. Inst. H. Poincare Phys. Theor.,} 69 (1998), no. 1, 31-82.

  
  \bibitem[GrMa1]{GrMa1} A.~Martinez, A.~Grigis\,:
    \newblock{\em Resonance widths in a case of multidimensional phase space tunneling.}
     \newblock{Asymptot. Anal.,} 91 (2015), no. 1, 33-904.
  
  \bibitem[GrMa2]{GrMa2} A.~Martinez, A.~Grigis\,:
    \newblock{\em Resonance widths for the molecular predissociation.} 
      \newblock{ Anal. PDE,} 7 (2014), no. 5, 1027-1055.
 
 \bibitem[Ha]{Ha} 
 E.M.~Harrell\,:
 \newblock{\em General lower bounds for resonances in one dimension.}
 \newblock{Comm. Math. Phys.,} 86 (1982), no. 2, 221-225.
 
 \bibitem[HeMa]{HeMa}
B.~Helffer, A.~Martinez\,: 
\newblock{\em Comparaison entre les diverses notions de r\'esonances.}
\newblock{Helv. Phys. Acta,} 60 (1987), no. 8, 992-1003.
 
 \bibitem[HeSj]{HeSj} 
 B.~Helffer, J.~Sj\"ostrand\,:
 \newblock{\em R\'esonances en limite semiclassique.}
 \newblock{Bull. Soc. Math. France, M\'emoire No. 24-25,} 1986.


\bibitem[HiSi]{HiSi} P.~Hislop, I.M.~Sigal\,:
\newblock{\em Introduction to Spectral Theory. With Applications to Schr\"odinger Operators.}
\newblock{Applied Mathematical Sciences, 113. Springer-Verlag, New York,} 1996.

\bibitem[Kl]{Kl} 
M.~Klein\,:
\newblock{\em On the mathematical theory of predissociation.} 
\newblock{Ann. Physics,} 178 (1987), no. 1, 48-73.

  \bibitem[KMSW]{KMSW} 
M.~Klein, A.~Martinez, R.~Seiler, X.W.~Wang\,:
 \newblock{\em  On the Born-Oppenheimer expansion for polyatomic molecules.}
  \newblock{Comm. Math. Physics,} 143 (1992), no. 3, 607-639.

  \bibitem[Ma]{Ma} 
A.~Martinez\,:
 \newblock{\em An Introduction to Semiclassical and Microlocal Analysis.}
 \newblock{Springer-Verlag New-York, UTX Series,} 2002.
 
 
  \bibitem[MaBr]{MaBr} A.~Martinez, P.~Briet\,:
  \newblock{\em Estimates on the molecular dynamics for the pre dissociation process.}
   \newblock{Preprint 2015.} http://arxiv.org/abs/1503.05507

 
 \bibitem[MaMe]{MaMe} A.~Martinez, B.~Messirdi\,:
   \newblock{\em Resonances for Diatomic Molecules in the Born-Oppenheimer Approximation.}
     \newblock{Comm. Partial Differential Equations,} 19 (1994), no. 7-8, 1139-1162.
 
 \bibitem[MaSo]{MaSo} A.~Martinez, V.~Sordoni\,:
   \newblock{\em Twisted peudodifferential calculus and application to the quantum evolution of molecules.}
     \newblock{Mem. Amer. Math. Soc.,} 200 (2009), no. 936.
 
  \bibitem[Na]{Na} S.~Nakamura\,:  
   \newblock{\em On an example of phase-space tunneling.} 
   \newblock{Ann. Inst. H. Poincare Phys. Theor.,}  63 (1995), no. 2, 211-229.
 
  \bibitem[ReSi]{ReSi} M.~Reed, B.~Simon\,:
 \newblock{\em Methods of Modern Mathematical Physics, Vol. I-IV,}
 \newblock{Academic Press New York}, 1972.

   \bibitem[Vo]{Vo} A.~Voros\,:
 \newblock{\em The return of the quartic oscillator: the complex WKB method.}
 \newblock{Ann. Inst. H. Poincare Sect. A,}  39 (1983), no. 3, 211-338.

  \bibitem[Ya]{Ya} D.R.~Yafaev\,:
 \newblock{\em The semiclassical limit of eigenfunctions of the Schr\"odinger equation and the Bohr-Sommerfeld quantization condition, revisited.}
 \newblock{Algebra i Analiz}, 22 (2010), no. 6, 270-291; 
 \newblock{translation in St. Petersburg Math. J.,} 22 (2011), no. 6, 1051-1067.
 
   \bibitem[Zw]{Zw} M.~Zworski\,:
  \newblock{\em Semiclassical Analysis, Graduate Studies in Mathematics, 138.}
   \newblock{American Mathematical Soc.,} 2012.





\end{thebibliography}
\end{document}